\theoremstyle{definition}
\newtheorem{definition}{Definition}
\newtheorem{lemma}{Lemma}
\newtheorem{theorem}{Theorem}
\newtheorem{proposition}{Proposition}
\newtheorem{approximation}{Approximation}
\theoremstyle{proof}
\title{Power-Availability-Aware Cell Association for Energy-Harvesting Small-Cell Base Stations}
\author{Fanny Parzysz, \IEEEmembership{Member, IEEE,
} Marco Di Renzo, \IEEEmembership{Senior Member, IEEE,} \\ and Christos Verikoukis, \IEEEmembership{Senior Member, IEEE}
{\color{gray}\thanks{Fanny Parzysz was with the Electronics Department, University of Barcelona, Spain (e-mail: fanny.parzysz@ieee.org).
Marco Di Renzo is with the Laboratoire des Signaux et Syst\`emes,
CNRS, Centrale Sup\'elec, Universit\'e Paris-Saclay, France (e-mail:
marco.direnzo@l2s.centralesupelec.fr).
Christos Verikoukis is with the Telecommunications Technological Centre of Catalonia (CTTC), Castelldefels, Spain (e-mail: cveri@cttc.es).
}}}
\begin{document}

\maketitle

\begin{abstract}
Energy harvesting brings a key solution to the increasing energy bill and environmental concerns but, at the same time, potential energy shortage may deteriorate the network availability.
In this paper, we analyze the performance of off-grid small-cell base stations (scBS) with finite battery capacity and design a new power-availability-aware cell association based on periodical broadcast of the scBS battery level.
Each mobile terminal (MT) targets its own set of available scBSs before association, i.e. the set of scBSs that can guarantee service provided (i) the scBS battery level, (ii) the power required to satisfy a received power constraint at each MT, given the scBS-MT distance and the shadowing attenuation, and (iii) the estimated power consumed to serve other MTs potentially associated to the same scBS, which is computed using stochastic geometry tools. 
Next, we develop for it a tractable performance analysis and derive closed-form expressions for the probability of power outage and the coverage probability.
By dynamically adapting to the fluctuations of the base station battery and the MT received power requirement, the proposed cell association allows a more even distribution of the available energy in the network, brings robustness against harvesting impairment and thereby, significantly outperforms conventional strategies.

\end{abstract}

\begin{IEEEkeywords}
Cellular networks, energy harvesting, cell association, power availability, downlink transmission, power consumption, power outage, stochastic geometry, Markov chain analysis.
\end{IEEEkeywords}

\section{Introduction}

The ongoing deployment of low-powered, low-coverage and low-cost small-cell base stations (scBS) has boosted the performance of cellular networks and is shaping future 5G architecture \cite{Liu2015,Buzzi2016}.
To conform to increasing environmental concerns while providing self-sustainability, energy harvesting technologies have been spurred not only for mobile terminals (MT) but also for access points and base stations. 
Conventional cell association policy is generally designed for on-grid networks, with unlimited power supply, and is not suitable for base stations powered by unpredictable sources. As pointed out in \cite{Liu2015Elsevier,Mao2015}, they do not make sparing use of the energy resource and may rapidly discharge the battery, leading to network availability degradation and reduction in the number of served users.
Thus, it is necessary to design novel strategies tailored to the randomness of the harvesting process and to the battery-limited constraint at base stations.

\textit{Contributions:} We design a new cell association policy dedicated to downlink small-cell networks with off-grid energy-harvesting base stations where: (i) each MT targets its own set of scBSs eligible for association using a \textit{power availability criteria} and (ii) effectively associates to one of them. Then, (iii) each scBS selects for service the associated MTs, in accordance with its actual battery level, and (iv) simultaneously transmits data to all selected MTs.

The proposed power availability criteria aims at adapting each scBS-MT association pair to the scBS battery level and the transmit power required to satisfy a received power constraint at each MT, which is considered as a QoS requirement. In addition to both path-loss and shadowing, it accounts for the power required by \textit{other} MTs potentially associated to the same scBS.
Such criteria remains simple and practically feasible since it is computed independently at each MT and only requires periodical broadcast of the scBS battery level.

To analyze the performance of the proposed cell association, we develop a comprehensive framework using a stochastic geometry approach, together with a Markov chain modeling for the battery. We compute closed-form expressions for both the probability of power outage and the coverage probability.
In addition, a general result is derived to calculate the probability that a base station consumes \textit{in total} exactly $m$ power units. It allows to evaluate the power consumed to transmit data to several users simultaneously , but also to discard the assumption of full-power transmission for the computation of the received interference.
Finally, simulations show that the proposed user association strategy achieves significant outage reduction, is less impacted by a decrease in the energy harvesting rate than conventional policies and is particularly robust against bursty energy arrivals.

\textit{Paper organization:}
The literature on the topic is reviewed in Section \ref{sec:related_work}. The network model and assumptions are described in Section \ref{sec:system_model}. In Section \ref{sec:def_cell_asso}, we present the proposed cell association, formulate the problem raised by the computation of its performance and describe the methodology considered to solve it. The performance of the proposed cell association is analyzed in Sections \ref{sec:performance_analysis_user}, \ref{sec:performance_analysis_BS} and \ref{sec:performance_analysis_final} and simulated in Section \ref{sec:simulations}. Section \ref{sec:conclusion} concludes this paper.

\section{Related work}
\label{sec:related_work}

\subsection{Cell association in energy-harvesting networks}

The optimization of cell association has mainly targeted energy-harvesting networks with centralized control, where global channel state information (CSI) and battery state information (BSI) are available at a central, possibly virtual, node.
A rich literature has proposed algorithms to maximize, for example, the user signal-to-noise ratio, the rate or the resource utilization accounting for the BS battery level and the energy harvesting process \cite{Rubio2014, Liu2014optim, Han2016}. Whereas the proposed solutions provide relevant performance benchmarks, such optimization problems are generally NP-hard, (sub)optimal solutions are found at high computational cost and closed-form performance results are hardly derived for large-scale networks.

Distributed association schemes have been proposed as a response to this issue. Each MT associates to a BS independently of other MTs and based on local CSI/BSI only. Such information shortage, coupled with the uncertainty in the energy arrivals at BSs and in the other MT association decision, renders distributed association more challenging as highlighted in \cite{Maghsudi2015}.
Letting the possibility for MTs to attach to a BS that cannot guarantee service, due to a battery level too low given the traffic demand, leads to severe power outage or service denial. This issue is further discussed in the following and is addressed throughout this paper.

\vspace{-6pt} 
\subsection{Network power availability}

A device equipped with energy harvesting capability may extract energy from a variety of natural or man-made sources, most commonly solar radiations \cite{Mao2015, Ku2015, Lu2015} and more recently, radio-frequency signals \cite{Lu2015, Wang2015,Ghazanfari2015, Maghsudi2016, Krikidis2015, Liu2016, Gu2016}.
Whatever the considered harvesting technology and proposed distribution to model the randomness of energy arrivals, a fundamental aspect of energy harvesting networks with battery-constrained BSs lies in their power availability.  
To guaranty service, one solution lies in considering hybrid base stations, provided with both harvesting facilities and access to the power grid, and analyze the performance of the network by including the power price from electrical grid, as done in \cite{Niyato2012}.
Different from such an approach, we focus on base stations solely powered from energy harvesting. In this case, performance can be significantly improved by controlling the set of BSs eligible for cell association, such that MTs are not necessarily associated with their nearest BS and may be offloaded from BSs with low energy resource to BSs with sufficient battery.

Numerous works have focused on the possibility to activate transmitters or put them in sleep mode depending on their battery level and on energy consumption. Such approach has been considered notably for relay selection \cite{Krikidis2015, Liu2016, Gu2016}, where relays participate to data transmission only if they have sufficient energy stored in their battery. However, these works consider a small-scale network, with a single source-destination pair. Larger-scale networks have be considered in \cite{Dhillon2014_Fundamentals}. In this, MTs associate with the activated base stations providing the highest received signal strength. Such process is opportunistic and varies according to the instantaneous amount of harvested energy. Yet, it relies on a fixed battery threshold and the resulting on / off decision applies equally to all MTs, independently of their power requirements and channel quality. Thus, a BS can be switched off even if it could serve MTs with low power requirements.

With another perspective, biased cell association is analyzed in \cite{Sakr2014, Yu2015}, i.e. the relative cell coverage is extended or shrunk depending on the BS-MT distance. While such solution accounts for the MT power requirement, the proposed bias is the same for all BSs and does not consider the effective amount of available energy, which fails in capturing the impact of the battery fluctuations.
The coupling existing between the user requirement and the battery level regarding cell association is still an open research field in energy harvesting networks, as pointed out in \cite{Mao2015}, and is investigated in this paper.

\vspace{-6pt}
\subsection{Acquisition of the battery level information}

It is widely assumed that data transmission and energy harvesting can occur simultaneously and that only one MT is served at a time. Such model has been considered for uplink transmissions with energy-harvesting MTs in \cite{Sakr2014, Sakr2015} and for downlink transmissions in \cite{Dhillon2014_Fundamentals, Yu2015}.
However, such assumption remains questionable for small-cell BSs simultaneously serving some tens of users and is not applicable once BSI is required at users.
In our framework, targeting the set of base stations eligible for cell association is conditioned by the battery level. This leads to the question of the BSI acquisition.

Several research works have proposed optimal transmission strategies assuming causal and non-causal knowledge of the battery level \cite{Ulukus2015, Ozel2011, Ho2012}.
Yet, in practice, obtaining non-causal information is not feasible and perfect real-time knowledge of the battery level necessitates excessive signaling overhead, low latency information exchange and tight synchronization.
Given the high variability of energy arrivals, traffic demand and user mobility,
continuous broadcast from all BSs of their instantaneous battery level (which varies at each power unit arrival and consumption) is extremely costly.

This raises the double issue of the accuracy and the frequency of battery information exchange. While robust designs have been proposed to account for imperfect battery knowledge, notably for resource allocation schemes \cite{Ku2015, Michelusi2012},
a fully-distributed cell association scheme has been proposed in \cite{Maghsudi2016} where MTs successively associate to base stations that are expected to guarantee a minimum rate for every data packet.
In \cite{Maghsudi2016}, users do not have any information about the effective channel quality, the harvested energy nor the traffic intensity, but only know about their distribution, for which a generic model is proposed.
 
In this paper, we address the issue of BSI acquisition with a novel perspective by considering periodical battery information broadcast. Contrary to real-time information exchange, more than one MT can associate to the same BS based on the same battery information, leading to unique challenge, notably to design an efficient power availability criteria.

\section{System Model and Assumptions}
\label{sec:system_model}

We describe in this section the network model, the assumptions for energy harvesting and the battery model. We invite the reader to refer to Table \ref{table:notation} which summarizes the notation used in this paper.
In particular, the index $k$ will always refer to a small-cell BS, $j$ to a MT and $l$ to a battery state. %

\begin{table*}

\caption{Considered notations}
\label{table:notation}

\centering
\begin{tabular}{|c|c|c|l|}
\hline
\multirow{2}{*}{Indexing} & & $k / j $ & Considered scBS / MT\\
& & $l / l^{\star} $ & Power buffer state / Lowest buffer state providing availability
\\
\hline 
\multirow{8}{*}{PPP / intensity} & \multirow{3}{*}{scBSs} 	& $\Phi_{B}$ / $\lambda_{B}$ 		& scBS location in the network \\[3pt]
 & & $\Phi_{B}^{\star}$ / $\Lambda_{B}$ 		& Set of the $p_{k0}$'s $\forall k$, i.e. Required transmit power from any scBS to MT$^{(0)}$ \\ [3pt]
 & & $\mathcal{A}_j$ / $\Lambda_{B}^{(A)}$ 		& Subset of the $p_{k0}$'s for which scBS$_k$ is available for MT$^{(0)}$ \\[3pt]
\cline{2-4} 

 & \multirow{5}{*}{MTs} & $\Phi_{MT}$ / $\lambda_{MT}$ 		& MT location in the network \\[3pt]
 & & $\Phi_{MT}^{\star}$ / $\Lambda_{MT}$ 		& Set of the $p_{0j}$'s $\forall j$, i.e. Required transmit power from scBS$^{(0)}$ to any MT \\[3pt]
 & & $\mathcal{S}_k$ / $\Lambda_{MT}^{(S)}$ 		& Subset of the $p_{0j}$'s for which scBS$^{(0)}$ serves MT$_j$ \\[3pt]
\hline

\multirow{9}{*}{Power} 	&  \multirow{8}{*}{at scBS$_k$}	& $p_{kj}$		& Power required to send data from scBS$_k$ to MT$_j$\\[3pt]
		&				& $\mathsf{P}_k^{(T)} = \underset{j \in \mathcal{S}_k}{\sum} p_{kj}$			& Total transmit power of scBS$_k$\\[3pt]
		& 				& $ \widetilde{\mathcal{P}}_{k\setminus j}^{(T)} $ 		& Estimated total transmit power from scBS$_k$ to all MTs requiring less $p_{kj}$  \\[3pt]
		
		&				& $\mathsf{P}_k^{(A)}$			& Available power at scBS$_k$\\[3pt]
		&				& $p_{l}^{(\text{cov})}$			& Maximum possible required transmit power when $\mathsf{P}_k^{(A)}=l$ \\[3pt]
		& 				& $\mathsf{P}_k^{(H)}$			& Power harvested by scBS$_k$\\[3pt]

		&				& $\mathsf{P}_{\max}$ / $\mathsf{L}$				& Power buffer capacity in watts / in power units ($\forall k$) \\[3pt]
\cline{2-4}

		&	at MT$_j$	& $\mathsf{P_{Rx}}$					& Received power constraint $\forall j$ \\[3pt]
\hline
\end{tabular}
\end{table*}

\vspace{-6pt}
\subsection{A PPP-based network}

To analyze the performance of the proposed strategy for cell association, we consider a downlink cellular network consisting of one tier of small-cell base stations (scBS) . 
As most of small-cell infrastructures are opportunistically deployed, resulting in irregularly-shaped networks, modeling the node position as random variables allows to analyze the network performance using tools of stochastic geometry and alleviates the need for extensive simulations. In this work, BSs are thus distributed according to an independent homogeneous Poisson point process (PPP), with density $\lambda_{B}$ and denoted as $\Phi_B$. Figure \ref{fig:illustration_association_voronoi} illustrates a potential realization for the BS location, where the cell edges are based on the Voronoi tessellation resulting from associating with the closest BS.

As a widely considered assumption \cite{Dhillon2014_Fundamentals, Sakr2014, Song2014,Yu2015}, mobile terminals are as well distributed according to a homogeneous PPP, independent of $\Phi_B$. It is denoted as $\Phi_{MT}$ and has density $\lambda_{MT}$. This implies that, at each time slot, a random number of MTs is taken from a Poisson distribution and their locations are uniformly distributed over the simulated network area.
Given the extreme densification of next-generation cellular networks, we assume that each MT can connect to more than one base stations, as done for example in \cite{Andrews2014, Bhushan2014}.

\subsection{Channel model for data transmission}
\label{sec:channel_model} 

Downlink transmissions are assumed throughout the analysis presented in this paper.
The channel model accounts for path-loss, shadowing and fast fading. All links are assumed mutually independent and identically distributed. %
First, let $l(r_{k,j}) = \kappa r_{k,j}^{\alpha}$ be the path-loss from scBS$_k$ to MT$_j$, where $r_{k,j}$ is the distance between them, $\kappa$ is the free-space path-loss at a distance of 1m and $\alpha$ is the path-loss exponent.

Second, the shadowing attenuation $\chi_{k,j}$ from scBS$_k$ to MT$_j$  follows a log-normal distribution with p.d.f
\begin{align}
f_{\chi_{k,j}}(w) = \frac{\zeta}{w\sigma \sqrt{2 \pi}} \exp \left( - \frac{\left(10 \log_{10} (w) - \mu\right)^2}{2 \sigma^2}\right)
\end{align}
where $\zeta = 10/ \ln(10)$ and where $\mu$ and $\sigma$ are in dB. %

Finally, the channel from scBS$_k$ to MT$_j$ is subjected to a random complex channel gain, referred as $h_{k,j}$ and whose power gain follows an exponential distribution of parameter $\nu$. The p.d.f. of $\vert h_{k,j} \vert^2$ is expressed by:
\begin{align}
f_{h_{k,j}} (w) = \nu \exp(-w \nu) 
\end{align}

As in \cite{Dhillon2014_Fundamentals, Maghsudi2016}, served MTs are separated in time, frequency or both (OFDMA), such that there is no intra-cell interference. %
The number of sub-channels available at each BS is assumed large enough to accommodate any MT requiring service, as long as sufficient power is available. Such baseline assumption provides a tractable benchmark for performance analysis, where the system is power-limited, rather than load-limited.

\subsection{Assumptions on the power requirement}
\label{sec:on_the_transmit_power}

Cell association policies based on fixed transmit power for any transmission may rapidly discharge the battery and reduce the number of served users \cite{Mao2015}. One way to minimize the power consumption is to make the transmit power as low as possible while satisfying user quality constraints. In this work, we propose an energy-minimization approach for power allocation and consider that scBSs adjust their transmit power to ensure that the average power received at any served user is equal to a given received power constraint $\mathsf{P_{Rx}}$, seen as a QoS requirement. Thereby, a scBS with low available power can nonetheless serve nearby users, with low power requirement. 
Let's define $p_{kj}$ as the transmit power of scBS$_k$ to sent data to MT$_j$, while satisfying $\mathsf{P_{Rx}}$. Thus,
\begin{align}
p_{kj} = \mathsf{P_{Rx}} \; \frac{l(r_{k,j})}{\chi_{k,j}}.
\label{eq:p_kj}
\end{align}
Such value of the BS transmit power is also estimated by each user to target available BSs and associate with one of them. Since fast fading can be hardly tracked for cell association, it is not included in the computation of $p_{kj}$.
Note that $p_{kj}$ can also be interpreted as the transmit power consumed by scBS$_k$ to send data to MT$_j$ over a sufficient period of time such that fast fading is averaged out.

\textit{Remark: } In addition to the transmit power, the overall BS power consumption generally includes the power dissipated in circuitry for data transmission, signal processing, network maintenance and site cooling. It is expressed by:

\begin{align}
P_k^\text{(Total)} = \sum_{j} \left(\eta p_{kj} + P_\text{(dsp)}\right) + P_\text{(idle)}
\label{eq:general_pkj}
\end{align}

\noindent where $\eta$ refers to the RF amplifier loss, $P_\text{(dsp)}$ to the per-user fixed power offset consumed for signal processing and
$P_\text{(idle)}$ to per-BS idle power consumption.
Even if the expression for $p_{kj}$ in Eq. \eqref{eq:p_kj} only accounts for the BS transmit power, the proposed framework can be generalized to more realistic consumption model. Indeed, for both the computation of the availability checking and performance analysis, the factor $\eta$ can be directly included in the received power constraint $\mathsf{P_{Rx}}$ and $P_\text{(idle)}$ can be deduced, without loss of generality, from the battery level which is broadcast  at each time slot for cell association. 
Finally, including $P_\text{(dsp)}$ in the proposed framework only requires some slight modification of the power availability criteria, as described in Subsection \ref{sec:Available_definition}. Yet, its actual value is usually low compared to $\eta p_{kj}$ and $P_\text{(idle)}$, and can be reasonably neglected for performance analysis.

\subsection{A discrete model for the power buffer}  
\label{sec:model_buffer}

We assume as in \cite{Dhillon2014_Fundamentals, Maghsudi2016} that all scBSs are harvesting energy from the environment, e.g. using wind harvester or solar panels. Each is equipped with a battery, or power buffer, with maximal capacity $\mathsf{P}_{\max}$. 
MTs are assumed powered by conventional batteries, charged by users themselves, and their energy limitation is not considered in this paper.

To successfully target the set of available BSs, MTs require to know how much power is stored in their battery. Such information can be easily broadcast by BSs, as part of signaling in control channels. Yet, to limit the resulting extra overhead, we assume that such broadcast is only performed periodically, as depicted in Figure \ref{fig:time_slot}. We refer to the time duration between two battery broadcasts as a time slot. Let $t$ denote for the beginning of a time slot, and $\tau$ for its duration. 
In addition, $\mathsf{P}_k^{(A)}(t)$ refers to the amount of power that is available in the buffer of scBS$_k$ at instant $t$ and broadcast towards MTs for cell association.
 $\mathsf{P}_k^{(H)}(t)$ denotes for the amount of harvested power and $\mathsf{P}_k^{(T)}(t)$ for the power required by associated users (equally the total power consumed by the considered scBS) during this time slot. Then, the following conditions hold:
\begin{align}
\forall t,k \quad \left\lbrace \begin{array}{l}
\mathsf{P}_k^{(A)}(t) = \mathsf{P}_k^{(A)}(t-\tau) - \mathsf{P}_k^{(T)}(t) +  \mathsf{P}_k^{(H)}(t) \\ 
\mathsf{P}_k^{(A)}(t) \leq \mathsf{P}_{\max} \quad \text{and} \quad 
\mathsf{P}_k^{(T)}(t) \leq \mathsf{P}_k^{(A)}(t-\tau)  
\end{array}
\right.
\label{eq:power_conditions}
\end{align}
For sake of readability, the time slot index $(t)$ is dropped in next sections if there is no ambiguity.

\subsubsection{Markov chain model}

We aim to analyze the variation of the battery level $\mathsf{P}_k^{(A)}$ between $t$ and $t + \tau$, $\forall t$. To do so, we propose the following model.
Both the harvested and consumed powers are continuous random variables and so is the amount of power that is stored in the battery. Since such assumption is not tractable for analysis, we consider an approach similar to \cite{Sakr2015, Seunghyun2013} by discretizing the states of the battery into a finite number of levels, as depicted in Figure \ref{fig:battery_model}. As a consequence, the harvested and consumed powers are discretized as well. Note that the stochastic processes of both of them only depend on the buffer state at previous time slot, such that the battery forms a finite-state Markov chain.
We respectively define $\mathsf{L}$ and $\varepsilon$ as the highest battery level and the step size (or power unit), i.e. $\mathsf{P}_{\max} = \varepsilon \mathsf{L}$.
The transition probability matrix of the battery Markov chain is denoted as $\mathbf{P} = \left[ \mathbb{P}_{\overrightarrow{lq}}\right]$, where $\mathbb{P}_{\overrightarrow{lq}}$ stands for the probability to go from state $l$ to state $q$  from one time slot to the next one. %
The probability that the battery has $l$ power units is denoted as $v_l$ and the probability vector of the battery states as $\mathbf{v} = \left[ v_0 v_1 \ldots v_{\mathsf{L}} \right]$.

As stated in Eq. \eqref{eq:power_conditions}, the random variations of $\mathsf{P}_k^{(A)}$ are fully characterized by analyzing the probability to harvest m power units ($\mathsf{P}_k^{(H)}= m$) and to consume n power units ($\mathsf{P}_k^{(T)}= n$), as done in the following two subsections.

\begin{figure*}
	\centering \resizebox{0.9\textwidth}{!}{%
\begin{tikzpicture}

\node [circle,minimum width=0.25cm, draw = black] (State_0) at (0,0) {\scriptsize $l=0$};
\node [circle,minimum width=0.25cm, draw = black, right= 2cm of State_0 ] (State_1) {\scriptsize $l=1$};
\node [circle,minimum width=0.25cm, draw = black, right= 2cm of State_1 ] (State_2) {\scriptsize $l=2$};
\node [right= 3cm of State_2 ] (State_points) {\scriptsize $\cdots$};
\node [circle,minimum width=0.25cm, draw = black, right= 3cm of State_points ] (State_L) {\scriptsize $l=\mathsf{L}$};

\draw[->] (State_0) edge  [loop left=25] node [left] {\scriptsize $\mathbb{P}_{0,0}$} (State_0);
\draw[->] (State_0) edge  [bend left=25] node [below]{\scriptsize $\mathbb{P}_{0,1}$} (State_1);
\draw[->] (State_0) edge  [bend left=25] node [above right]{\scriptsize $\mathbb{P}_{0,2}$} (State_2);
\draw[->] (State_0) edge  [bend left=25] node [above] {\scriptsize $\mathbb{P}_{0,\mathsf{L}}$} (State_L);

\draw[->] (State_1) edge  [loop right=25] node [right] {\scriptsize $\mathbb{P}_{1,1}$} (State_1);
\draw[->] (State_1) edge  [bend left=25] node [below right]{\scriptsize $\mathbb{P}_{1,2}$} (State_2);
\draw[->] (State_1) edge  [bend left=25] node [below left]{\scriptsize $\mathbb{P}_{1,\mathsf{L}}$} (State_L);
\draw[->, dashed] (State_1) edge  [bend left=25] node [above] {\scriptsize $\mathbb{P}_{1,0}$} (State_0);

\draw[->] (State_2) edge  [loop right=25] node [right]{\scriptsize $\mathbb{P}_{2,2}$} (State_2);
\draw[->] (State_2) edge  [bend left=25] node [below]{\scriptsize $\mathbb{P}_{2,\mathsf{L}}$} (State_L);
\draw[->, dashed] (State_2) edge  [bend left=25] node [above right] {\scriptsize $\mathbb{P}_{2,1}$} (State_1);
\draw[->, dashed] (State_2) edge  [bend left=25] node [below right] {\scriptsize $\mathbb{P}_{2,0}$} (State_0);

\draw[->] (State_L) edge  [loop right=25] node [right]{\scriptsize $\mathbb{P}_{\mathsf{L},\mathsf{L}}$} (State_L);
\draw[->, dashed] (State_L) edge  [bend left=25] node [above]{\scriptsize $\mathbb{P}_{\mathsf{L},2}$} (State_2);
\draw[->, dashed] (State_L) edge  [bend left=25] node [above] {\scriptsize $\mathbb{P}_{\mathsf{L},1}$} (State_1);
\draw[->, dashed] (State_L) edge  [bend left=25] node [below] {\scriptsize $\mathbb{P}_{\mathsf{L},0}$} (State_0);

\end{tikzpicture}

 	}
	\caption{Modeling the battery states $\mathsf{P}_k^{(A)}$ of scBS$_k$ as a Markov chain}
	\label{fig:battery_model}
\end{figure*}
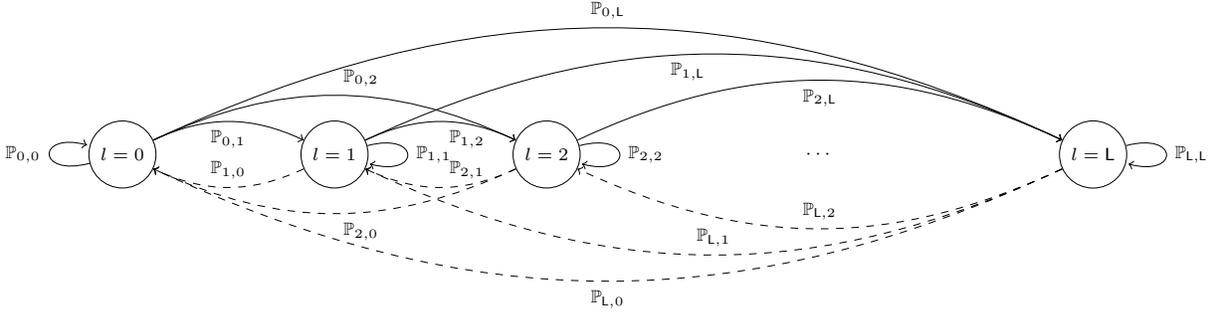

\subsubsection{Model for energy harvesting (battery charge)}

Without loss of generality, a general energy harvesting model is considered, where $N_e$ power units are harvested at a time and where the power arrivals follow a Poisson process of intensity $\lambda_e$. Both parameters $N_e$ and $\lambda_e$ allows to model the burstiness of the harvesting process. The probability $\mathbb{P}_{\text{H}}(m N_e)$ to harvest $m N_e$ units over duration $\tau$ is independent from the current battery state and is given by:
\begin{align}
\mathbb{P}_{\text{H}}(m N_e) = \mathbb{P}\left( \left \lfloor \mathsf{P}_k^{(H)} \right \rfloor = m N_e\right)
= \frac{\left(\lambda_e \tau \right)^m}{m!} \exp \left(- \lambda_e \tau \right).
\label{eq:prob_H}
\end{align}
	As shown in \cite{Miozzo2013, Ku2015}, solar energy arrivals are efficiently modeled as stochastic Markov processes, and more specifically Poisson process \cite{Wang2015}. In this case, the parameters $N_e$ and $\lambda_e$ are determined by the quantity of sunlight, cloud coverage, air  density, temperature, but also the size of the photo-voltaic panel \cite{Wang2015}.
We also highlight that the analysis proposed in this work is valid for other harvesting processes, by considering other function  $\mathbb{P}_{\text{H}}$.

\subsubsection{Model for energy consumption (battery discharge)}
\label{sec:frame_model}

To model the battery discharge, a birth-death process has been proposed in \cite{Dhillon2014_Fundamentals, Krikidis2015, Gu2016, Liu2016}, where users consume exactly one power unit and where the battery level is updated at each new data transmission. Yet, such a model cannot be applied in our framework. First, each user has a different power requirement, as stated in Eq. \eqref{eq:p_kj}, implying that it is not sufficient to consider the number of users currently being served to model the battery discharge. 
Moreover, in our case, updated battery levels are \textit{not} broadcast after each new user association, such that the battery decrease potentially accounts for transmission towards \textit{more than one} user.

To compute the probability to consume $n$ power units  ($\mathsf{P}_k^{(T)}= n$) and characterize the transition probability matrix  $\mathbf{P}$, the power consumed by \emph{all} users associated during this time slot should be analyzed\footnote{ For analysis, we assume that data transmission effectively occurs in the next time slot, as shown in Figure \ref{fig:time_slot}.}.  
The amount of elementary power $p_{kj}$ consumed to send data from a given scBS$_k$ to MT$_j$ is rounded up to the nearest battery unit and the probability $\mathbb{P}_{\text{T}}(n \; \vert \; l)$ to consume a \emph{total} of $n$ power units given that scBS${_k}$ has $l$ power units in its battery is equal to $ \mathbb{P}\left( \mathsf{P}_k^{(T)}(t) \simeq \sum_j \left \lceil p_{kj} \right \rceil = n \; \vert \; l \right)$. Its computation is one of the main issues solved in this paper. %

\begin{figure}
	\centering \includegraphics[width = 0.9\columnwidth]{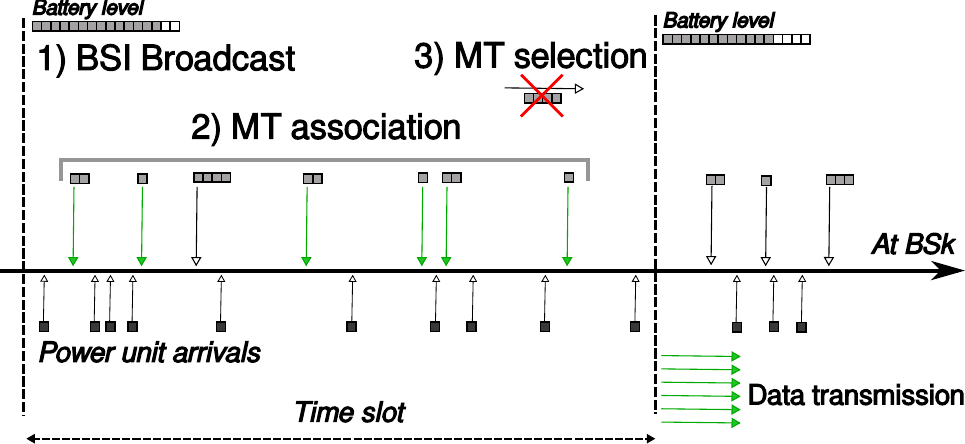}
	\caption{Time slot model}
	\label{fig:time_slot}
\end{figure}

\textit{Remark 1:} The arrival order within a time slot is not considered and MTs are selected in ascending order of their power requirement.

\textit{Remark 2:} Increasing the battery broadcast period, i.e. the time slot duration $\tau$, suggests a higher $\lambda_{MT}$, since more MTs can request cell association during the same slot. This is discussed in Section \ref{sec:simulations}.

\begin{figure*}
	\centering
	\subfloat[$\mathsf{P_{Rx}} =$-60dBm, $N_e = 1$, $\lambda_e = 5\% \; \mathsf{L} / N_e$]{ \includegraphics[width=0.32\textwidth]{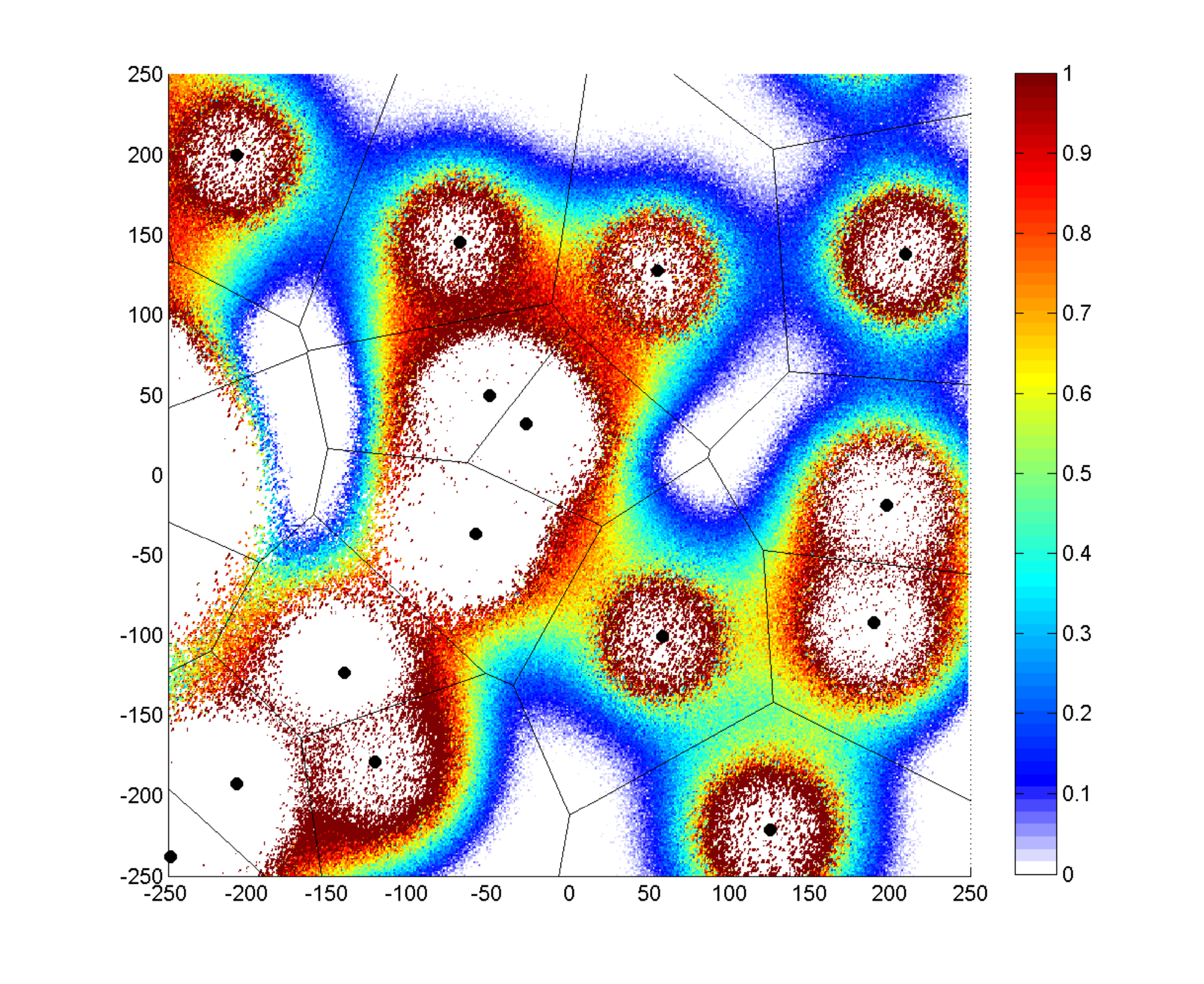}}
	\subfloat[$\mathsf{P_{Rx}} =$-65dBm, $N_e = 1$, $\lambda_e = 10\% \; \mathsf{L} / N_e$]{ \includegraphics[width=0.32\textwidth]{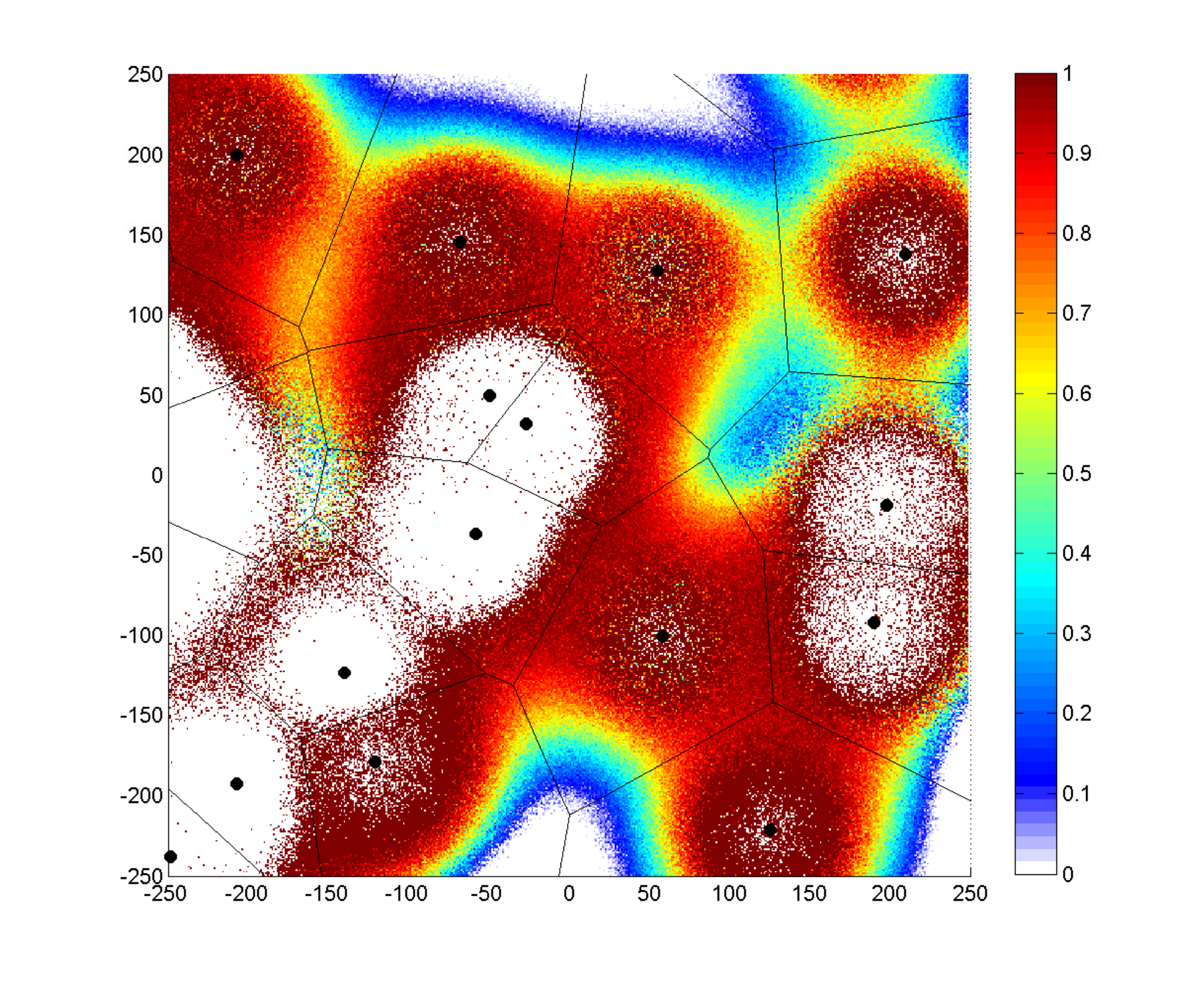}}
	\subfloat[$\mathsf{P_{Rx}} =$-65dBm, $N_e = 80$, $\lambda_e = 10\% \; \mathsf{L} / N_e$]{ \includegraphics[width=0.32\textwidth]{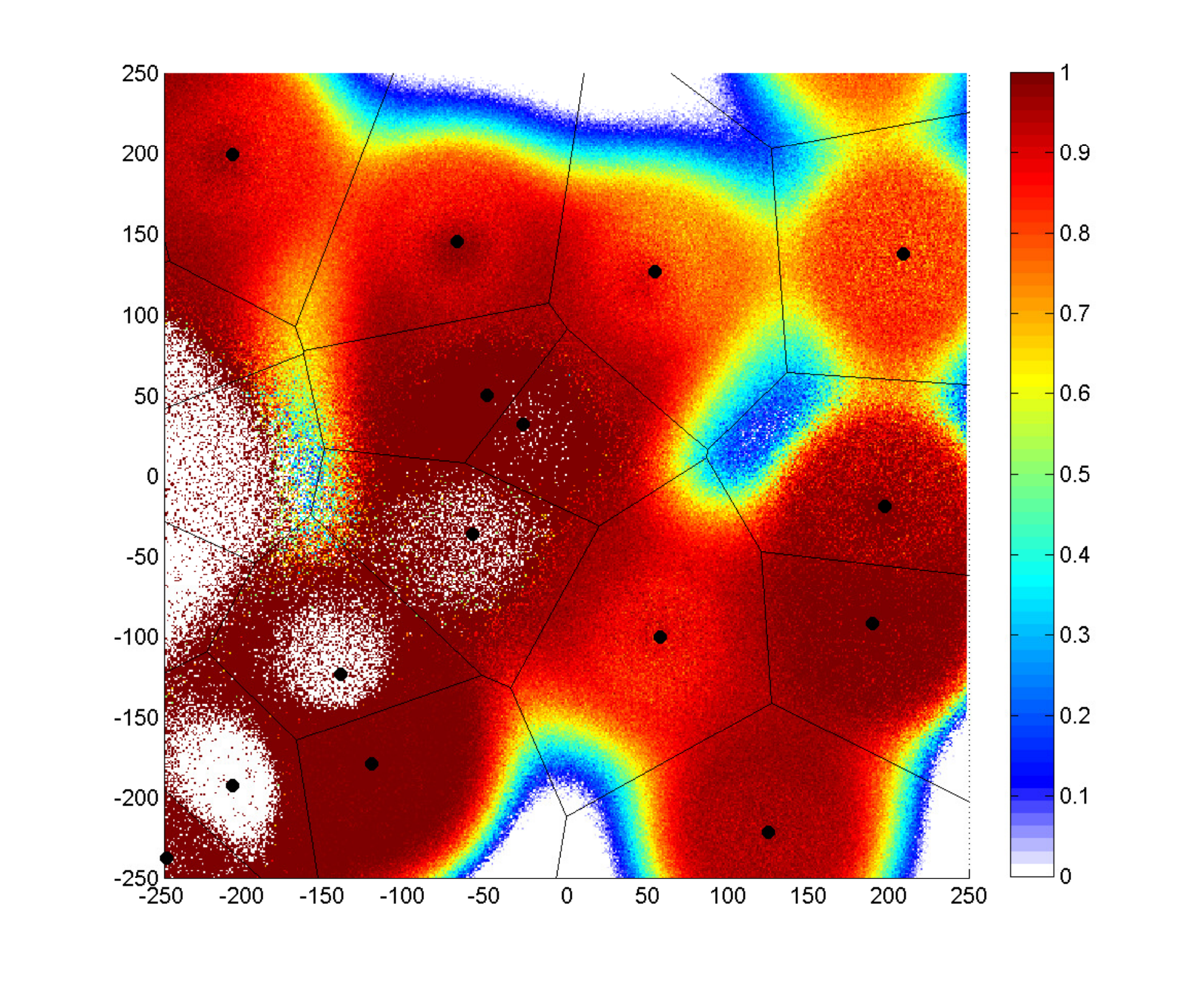}}
	\caption{Gain in the probability of outage
	}
	\label{fig:illustration_association_voronoi}
\end{figure*}

\section{Proposed Cell Association and Methodology}
\label{sec:def_cell_asso}

This section describes the proposed energy-aware cell association with BS availability checking and presents the methodology considered for performance analysis.

\subsection{Overview of the proposed cell association}
\label{sec:overview}

In a network deprived of access to the power grid, a BS with low battery may not be able to guaranty service.
By knowing the BS battery state and by comparing such value with the energy required to meet the received power constraint $\mathsf{P_{Rx}}$, users can avoid associating with a nearby BS in energy shortage and prefer another with higher battery level, even if such scBS is farther. 
The proposed cell association accounts for both aspects of the scBS available power and the MT required transmit power, as expressed in Eq. \eqref{eq:p_kj}, and can be interpreted as an energy-oriented offloading technique. We further assume that scBSs are not interacting nor cooperating.

\subsubsection{Procedure} The proposed strategy is depicted in Figure \ref{fig:time_slot} and is performed in three steps, over each time slot.

\textit{- At the beginning of a time slot}, scBSs broadcast their battery state, for example by using the  PDCCH (Physical Downlink Control Channel) of the LTE frame.
	
\textit{- During this time slot}, when a new MT requests service, cell association is performed in two steps: 1) Power availability and 2) Effective cell association. First, this MT checks if the transmit power required to satisfy the received power constraint $\mathsf{P_{Rx}}$ is compatible with the battery level of the neighboring scBSs, as announced by broadcast. By doing so, each MT targets a subset of scBSs which are eligible, or available, for cell association. Second, each MT associates to one of the available small-cell BSs. It chooses the one which consumes the least transmit power while meeting $\mathsf{P_{Rx}}$.

\textit{- At the end of the time slot}, a third step specific to the considered energy-constrained context must be considered before data transmission. Due to power limitation, a scBS may not be able to serve all associated MTs, if too many are associated to it despite power availability checking.
Thus, each scBS proceeds to MT selection. Unselected MTs are denied service and wait for the next time slot, with updated information on the battery state, to target newly available BSs and perform once again association. Nevertheless, the proposed availability criteria results in a very low probability to be denied service once associated, as shown in Section VIII.

\subsubsection{Principal advantages}

Figure \ref{fig:illustration_association_voronoi} illustrates the gain in the probability of power outage that is achieved by the proposed energy-aware cell association (namely, scheme "$\mathcal{A}$") over conventional association without availability checking (namely, scheme "\textit{w/o}$\mathcal{A}$"). For each MT location in the network, we compute the ratio $\left( \mathbb{P}_{\text{out}}^{(\text{w/o}\mathcal{A})} - \mathbb{P}_{\text{out}}^{(\mathcal{A})} \right) / \mathbb{P}_{\text{out}}^{(\text{w/o}\mathcal{A})}$. As observed in Figures (a) and (b), the proposed strategy significantly improves the performance at cell edge, even when the energy harvesting rate is low compared to the cell coverage (Fig. (a)). When the energy arrival is bursty ($N_e = 80$ - Figure (c)), performance gain is generally higher and obtained in a larger part of the network. Indeed, the availability criteria follows the BS battery variations, such that users are dynamically distributed among neighboring scBSs as a function of energy arrivals. This is further discussed in Section \ref{sec:sim_results}.

This reduction in the probability of power outage provides fairness in the user service, as less energy is consumed to serve cell-center users, implying that more energy is available to serve cell-edge users with the same received power constraint.
Contrary to network-optimized association policies, the proposed strategy can be easily implemented and does not require extra signaling overhead (excepting BS battery broadcast), e.g. link quality exchange, nor tight BS synchronization/cooperation, nor heavy computation at a central node, which all consume significant energy as well.

\subsection{Description of the proposed cell-association} 

\subsubsection{Set of available small-cell BSs}
\label{sec:Available_definition}

The power availability criteria characterizes the set of available scBSs and each MT determines its \emph{own} subset of available scBSs as follows:

\begin{definition}
The subset of scBSs declared available by MT$_j$ is denoted as $\mathcal{A}_j$ (with $\mathcal{A}$ for Available) and is given by:
\begin{align*}
& \mathcal{A}_j \triangleq \left \lbrace \text{scBS}_k \in  \Phi_B:  p_{kj} + \widetilde{\mathsf{P}}_{k\setminus j}^{(T)}  \leq \mathsf{P}_k^{(A)}   \right \rbrace 
\end{align*}
where $\widetilde{\mathsf{P}}_{k\setminus j}^{(T)}$ refers to the \emph{estimated} transmit power required by other MTs associated to scBS$_k$ and selected before MT$_j$ given the MT selection rule (Definition \ref{def:selection_rule}). It is defined as the expected sum of the power required by any other MT$_i$ satisfying $p_{ki} < p_{kj}$, i.e.
\begin{align*}
\widetilde{\mathsf{P}}_{k\setminus j}^{(T)} &\triangleq \mathbb{E} \left[ \sum_{i \in \Phi_{k\setminus j}} p_{ki}\right]
\quad \text{with} \quad
\Phi_{k\setminus j} \triangleq \left \lbrace \text{MT}_i \in  \Phi_{MT}:  p_{ki} < p_{kj}  \right \rbrace .
\end{align*}
The estimate $\widetilde{\mathsf{P}}_{k\setminus j}^{(T)}$ is computed later on in Eq. \eqref{eq:estimate_Psetminus0}.
\label{def:available}
\end{definition}

\emph{Insights for Definition \ref{def:available}:}
Several power availability criteria are possible for the scBS pre-selection rule. The simplest one would be to check if the required power $p_{kj}$ does not exceed the battery level, i.e. $p_{kj} \leq \mathsf{P}_k^{(A)}$ and $\widetilde{\mathsf{P}}_{k\setminus j}^{(T)}=0$.
Whereas such criteria is valid if the battery level is known at MTs in real-time, this is not adapted to the considered multi-user transmission context with periodical broadcast. Given the power constrains in Eq. \eqref{eq:power_conditions}, the \emph{total} power of served MTs, i.e. $\mathsf{P}_{k}^{(T)} = p_{kj}+\sum_{i \in \mathcal{S}_k} p_{ki}$, with $\mathcal{S}_k$ the set of MTs served by scBS$_k$ , has high probability to exceed $\mathsf{P}_k^{(A)}$ with such availability criteria, leading to severe power outage. 
To avoid associating to a scBS that cannot guarantee service, a valid power availability criteria should accommodate the transmit power required by the other users served by scBS$_k$, i.e. $\sum_{i \in \mathcal{S}_k} p_{ki}$, in addition to the power $p_{kj}$ required by MT$_j$. However, the total consumption $ \mathsf{P}_{k}^{(T)}$ depends on the cell association of all MTs of the network which cannot be known in advance for computing the power availability criteria (non-causal information). An estimate of $\mathsf{P}_{k}^{(T)}$ is thus required.
If overestimated, the set of available scBSs may be excessively reduced.
If underestimated, a scBS may be associated to more MTs than it can actually serve given its available power, as for the criteria $p_{kj} \leq \mathsf{P}_k^{(A)}$.

As scBSs serve MTs in ascending order of required powers, it is sufficient to account only for the total transmit power of the served MTs that require less power than $p_{kj}$. In the proposed availability criteria, MT$_j$ assumes that any other MT which requires less power than $p_{kj}$ is associated with scBS$_k$, regardless of their effective cell association. The estimate $\widetilde{\mathsf{P}}_{k\setminus j}^{(T)}$ only requires to know the density of MTs $\lambda_{MT}$ and the battery level $\mathsf{P}_k^{(A)}$. 
Its value can be computed using stochastic geometry tools as follows:
\begin{align}
\widetilde{\mathsf{P}}_{k\setminus j}^{(T)}&= \lambda_{MT} \Upsilon \frac{2 / \alpha} {2 / \alpha +1} \left( p_{kj} \right)^{\frac{2}{\alpha}+1}
\label{eq:estimate_Psetminus0}
\\
\text{where} \quad & \left \lbrace \begin{array}{l l}
\Upsilon &= \pi \left(\frac{1}{\mathsf{P_{Rx}} \kappa}\right)^{\frac{2}{\alpha}} \exp \left(  \frac{2/\alpha}{\zeta} \mu + \frac{1}{2} \left(\frac{2/\alpha}{\zeta}\right)^2 \sigma^2 \right) 
\\
\zeta &= 10/ \ln(10)
\end{array} \right. \nonumber
\end{align}
The proof of Eq. \eqref{eq:estimate_Psetminus0} is given in Appendix \ref{appendix:estimate_Psetminus0}.

\textit{Remark: } Assuming a generalized power consumption model as described by Eq. \eqref{eq:general_pkj}, the per-user power consumed for signal processing at the serving BS can be included in the power availability criteria by adding $P_\text{(dsp)} \Lambda_{MT}(p_{kj})$ to $\widetilde{\mathsf{P}}_{k\setminus j}^{(T)}$, where $\Lambda_{MT}(p_{kj})$ is the average number of MTs requiring less than $p_{kj}$. Yet, this does not modify the proposed analysis.

\vspace{5pt}
\subsubsection{Effective cell association}
\label{sec:CellAss}

Once available scBSs have been targeted, each MT effectively associates to one of them.

\begin{definition}
A mobile terminal MT$_j$ associates with the available base station which consumes the least transmit power to satisfy the received power constraint $\mathsf{P_{Rx}}$. It is denoted as scBS$_0$, with:
\begin{align}
\text{scBS}_0 = \underset{\text{scBS}_k \in \mathcal{A}_j} {\arg \min} \left\lbrace \mathsf{P_{Rx}} \; \frac{l(r_{k,j})}{\chi_{k,j}}   \right \rbrace
= \underset{\text{scBS}_k \in \mathcal{A}_j} {\arg \min} \left\lbrace p_{kj}   \right \rbrace
\end{align}
\vspace{-5pt}
\label{def:closest_power}
\end{definition}
This cell association is equivalent to associating with the small base station for which the received 
power is maximal, for a fixed transmit power equal to $\frac{1}{\mathsf{P_{Rx}}}$.

\vspace{5pt}
\subsubsection{MTs selection rule}
\label{sec:MTselection}

To maximize the number of served MTs while satisfying the received power constraint for all of them, we consider the following approach: 
\begin{definition}
Each small-cell BS selects mobile terminals to be served among the associated ones and in ascending order of their power requirement till all associated users are selected or till the overall power consumption reaches the battery level. This subset of MTs is denoted as $\mathcal{S}_k$ for scBS$_k$.
\label{def:selection_rule}
\end{definition}

We will show that the probability to be associated but rejected is negligible for the proposed cell association.

\textit{Remark: } Such MT selection strategy allows hard decision for the power availability criteria, which is not permitted by a random MT selection, where a small-cell base station can be declared available only with a given probability.

\subsection{Problem formulation and methodology for analysis}

To analyze the performance of the proposed energy-aware cell association in terms of power outage and coverage probability, it is necessary to understand how the availability criteria of Definition \ref{def:available}, the effective association of Definition \ref{def:closest_power} and the MTs selection rule of Definition \ref{def:selection_rule} affect the set of served MTs and the battery level and reciprocally, how the battery level determines the set of scBSs eligible for association.

Although Monte-Carlo simulations offer wide possibilities and numerous variable factors for refined performance analysis, they suffer from two main drawbacks. First, they are time-consuming and their accuracy is significantly affected by the number of samples considered for simulations. Second, they confine the obtained conclusions only to the considered parameter settings and new sets of simulation are required for performance generalization, since parameters cannot be modified retroactively. This motivates us to compute closed-form expressions for performance analysis, which is the object of the remainder of the paper.

The analytical approach considered in \cite{Dhillon2014_Fundamentals} to compute the power outage and coverage probabilities is based on the expected cell coverage, average number of served MTs and average transmit power consumption. Yet, it cannot be applied to the proposed power availability checking which accounts for their instantaneous fluctuations. Instead of averaging, the probability mass functions (pmf) of the battery states and of the number of power units consumed at each BS transmission are computed in this paper. To this end, we point out that the Markov chain formed by the battery levels converges to a unique stationary distribution solving:
\begin{align}
\left \lbrace \begin{array}{ll}
\mathbf{v} &= \mathbf{v} \mathbf{P} \\
1 &= \mathbf{v} \mathbf{1}
\end{array}
\right.
 \label{eq:steady_state} 
\end{align}
where $ \mathbf{1}$ is a all-ones vector and $\mathbf{v} = \left[ v_0 v_1 \ldots v_{\mathsf{L}} \right]$ is the steady-state battery probability vector, of dimension $1\times (\mathsf{L}+1)$.

\begin{proof}
	The considered Markov chain is aperiodic and irreducible ($\mathbb{P}_{i,j}>0, \forall i,j$, i.e. all the states communicate), with finite state space. 
	Next, $\forall l$, $\sum_q \mathbb{P}_{\overrightarrow{lq}} = 1$ and, given that the energy harvesting and energy consumption processes are distinct and independent,  $\mathbb{P}_{\overrightarrow{lq}} \neq \mathbb{P}_{\overrightarrow{ql}}$, $\forall l,q$, implying that the transition probability matrix $\mathbb{P}$ is not symmetric. Consequently, it is row stochastic and not doubly stochastic.
\end{proof}

The proposed analysis aims at solving such system.
The main challenge lays in the computation of the transition probability matrix of the battery Markov chain $\mathbf{P} = \left[ \mathbb{P}_{\overrightarrow{lq}}\right]$, of dimension $(\mathsf{L}+1)\times (\mathsf{L}+1)$, which is itself a function of $\mathbf{v}$ and renders the system of equations in \eqref{eq:steady_state} non-linear.
To solve such issue, we consider the following methodology. First, we analyze the conditions for availability and characterize the set of available BSs as a function of the probability vector $\mathbf{v}$ (Subsection \ref{sec:Availability_analysis}). Next, we deduce the sets of MTs associated and served by a given BS (Subsection \ref{sec:Service_analysis}). Based on the densities of available scBSs and served MTs, we compute $\mathbb{P}_{\overrightarrow{lq}}, \; \forall l,q$ as a function of $\mathbf{v}$ and propose an algorithm to solve Eq. \eqref{eq:steady_state} (Subsection \ref{sec:Battery_analysis}). Finally, performance are computed (Subsection \ref{sec:performance_analysis_final}).

\section{Analysis of the proposed cell association: \\ the User's perspective} 
\label{sec:performance_analysis_user}

This section is dedicated to the characterization of the set of available BSs and the probability of cell association.

\subsection{Analysis of the set of available scBSs}
\label{sec:Availability_analysis}

\textit{Generalities: }For analysis, the Slivnyak theorem allows considering a single typical small-cell base station, denoted as scBS$^{(0)}$, without loss of generality. Let's define $\Phi_{MT}^{\star}$ as the point process of the $p_{0j}$'s $\forall j$, for this typical scBS. From the properties of displaced PPPs, scBS$^{(0)}$ sees the $p_{0j}$ as distributed according to a non-homogeneous PPP on $\mathbb{R}^+$ with intensity $\Lambda_{MT}(p) = \lambda_{MT} \Upsilon p^{\frac{2}{\alpha}}$, as proved in Appendix \ref{appendix:estimate_Psetminus0}.
Similarly, a typical mobile terminal, denoted as MT$^{(0)}$, can be considered for analysis. It sees the $p_{k0}$'s, $\forall k$ as distributed according to a non-homogeneous PPP, denoted as $\Phi_{B}^{\star}$, and of density  $\Lambda_{B}(p)= \lambda_{B} \Upsilon p^{\frac{2}{\alpha}}$.

Given the power availability criteria of Definition \ref{def:available} and the estimated power consumption of Eq. \eqref{eq:estimate_Psetminus0}, 
the typical user MT$^{(0)}$ assumes that any other MT which requires less power than $p_{k0}$ is associated with scBS$_k$.
Denoting $\mathsf{P}_k^{(A)}= l$ power units, the availability criteria is equivalently defined as:
\begin{align}
& p_{k0} + \widetilde{\mathsf{P}}_{k\setminus 0}^{(T)}  \leq \mathsf{P}_k^{(A)}
\quad \Leftrightarrow \quad g_{A}(p_{k0}) \leq l
\nonumber \\
& \text{with } \quad g_{A}(p) = p + \lambda_{MT} \Upsilon \frac{2 / \alpha} {2 / \alpha +1}  p^{\frac{2}{\alpha}+1} 
\label{eq:g_A}
\end{align}
Since $g_{A}(p+1) - g_{A}(p)> 1$, increasing the required transmit power by one power unit necessitates a much higher increase of the power available in the power buffer to account for all other MTs requiring only $p$.

We define $p_{l}^{(\text{cov})}$ as the maximum power  that can be required by a MT served by a scBS having $l$ power units in its battery, i.e. the maximum power satisfying the availability criteria:
\begin{align}
p_{l}^{(\text{cov})} = g_{A}^{-1}(l)
\label{eq:p_cov}
\end{align}
$p_{l}^{(\text{cov})}$ can be interpreted in terms of power coverage. Any MT$_j$ for which $p_{kj} \leq p_{l}^{(\text{cov})}$
 declares scBS$_k$ available for data transmission. Reciprocally, the MTs served by scBS$_k$ are solely located in this subset. Note that scBS$_k$ is necessarily unavailable for any MT$_j$ such that $p_{kj} > p_{\mathsf{L}}^{(\text{cov})}$. 

Likewise, the lowest buffer state $l^{\star}$ providing availability of scBS$_k$ for MT$^{(0)}$ is defined as:
\begin{align}
l^{\star}-1 < g_{A}(p_{k0}) \leq l^{\star} \quad \Leftrightarrow \quad
p_{l^{\star}-1}^{(\text{cov})} < p_{k0} \leq p_{l^{\star}}^{(\text{cov})} 
\label{eq:l_star}
\end{align}
with $p_{k0}$ the required power to transmit data from scBS$_k$ to MT$^{(0)}$.
Using such notation, scBS$_k$ is declared available if its battery state $l$ is greater than $l^{\star}$. This occurs with probability:
\begin{align}
\mathbb{P}_{\text{B}}^{(\text{A})} (p_{k0}) = 
\underset{l = l^{\star}}{\overset{\mathsf{L}}{\sum}} v_l
\end{align} 
where $v_l$ is the probability to be in state $l$. Such probability allows to compute the intensity of available BSs.

Given that both scBSs and MTs are independently located, that energy is harvested independently at each base station and that the power availability checking is performed independently at each MT as well, each BS of $\Phi_{B}^{\star}$ is declared available by MT$^{(0)}$ with  probability $\mathbb{P}_{\text{B}}^{(\text{A})} (p_{k0})$.

\begin{lemma}
The subset $\mathcal{A}_0$ of small-cell base stations available for MT$^{(0)}$ forms a PPP on $\left[0,p_{\mathsf{L}}^{(\text{cov})}\right]$, with density
\begin{align}
& d\Lambda_{B}^{(A)} (p) = d\Lambda_{B} (p) \mathbb{P}_{\text{B}}^{(\text{A})} (p)
 = d\Lambda_{B} (p) \underset{l = l^{\star} }{\overset{\mathsf{L}}{\sum}} v_l
\nonumber
\\
& \Rightarrow \quad \Lambda_{B}^{(A)} (p) = \underset{l=0}{\overset{l^{\star}-1}{\sum}} v_l \Lambda_{B} (p_{l}^{(\text{cov})}) +  \underset{l = l^{\star} }{\overset{\mathsf{L}}{\sum}} v_{l} \Lambda_{B} (p)
\label{eq:Lambda_B_A}
\end{align}
with $ l^{\star}$  depending on $p$ as in Eq. \eqref{eq:l_star} and $\Lambda_{B}(p) = \lambda_{B} \Upsilon p^{\frac{2}{\alpha}}$.
\end{lemma}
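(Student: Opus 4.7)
The plan is to obtain $\mathcal{A}_0$ as an independent thinning of the PPP $\Phi_B^{\star}$ by the availability indicator, and then to recast the resulting intensity measure in the compact closed form stated. Since $\Phi_B^{\star}$ has already been shown to be Poisson with intensity measure $\Lambda_B$ on $\mathbb{R}^+$, and each scBS$_k$ is retained in $\mathcal{A}_0$ with probability $\mathbb{P}_{\text{B}}^{(\text{A})}(p_{k0})$, the standard thinning theorem immediately delivers the differential identity $d\Lambda_B^{(A)}(p) = \mathbb{P}_{\text{B}}^{(\text{A})}(p)\, d\Lambda_B(p)$, provided the availability indicators are mutually independent conditionally on $\Phi_B^{\star}$.

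The substantive step is therefore the conditional independence across $k$ of the retention indicators $\mathbf{1}\{g_A(p_{k0}) \leq \mathsf{P}_k^{(A)}\}$. Each battery level $\mathsf{P}_k^{(A)}$ is driven by the Poisson harvesting process of scBS$_k$ (independent across $k$ by assumption) and by the consumption process, which itself depends only on the MTs associated to scBS$_k$. Because every MT associates to a single scBS (Definition \ref{def:closest_power}) and $\Phi_{MT}$ is independent of $\Phi_B$, conditioning on $\Phi_B^{\star}$ splits $\Phi_{MT}$ into sub-PPPs over disjoint serving regions, which are independent by the complete-independence property of the PPP. Combined with independent harvesting marks, this yields mutually independent $\{\mathsf{P}_k^{(A)}\}_k$ given $\Phi_B^{\star}$, each with stationary distribution $\mathbf{v}$; hence the marginal retention probability is $\mathbb{P}_{\text{B}}^{(\text{A})}(p_{k0}) = \sum_{l=l^{\star}}^{\mathsf{L}} v_l$. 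The support $[0, p_{\mathsf{L}}^{(\text{cov})}]$ follows at once: for $p > p_{\mathsf{L}}^{(\text{cov})}$, monotonicity of $g_A$ forces $l^{\star} > \mathsf{L}$ and the sum is empty.

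For the closed-form cumulative intensity, the key trick is to rewrite the retention probability as
\begin{equation*}
\mathbb{P}_{\text{B}}^{(\text{A})}(p') = \sum_{l=0}^{\mathsf{L}} v_l \, \mathbf{1}\{p' \leq p_l^{(\text{cov})}\},
\end{equation*}
which is just Eq. \eqref{eq:l_star} rephrased via monotonicity of $g_A$. Integrating over $[0,p]$ against $d\Lambda_B$ and swapping the finite sum with the integral gives $\Lambda_B^{(A)}(p) = \sum_{l=0}^{\mathsf{L}} v_l\, \Lambda_B(\min(p, p_l^{(\text{cov})}))$. Splitting the sum according to whether $l < l^{\star}(p)$ (so the minimum equals $p_l^{(\text{cov})}$) or $l \geq l^{\star}(p)$ (so the minimum equals $p$) reproduces exactly the two-sum expression of the lemma.

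The main obstacle is the rigorous justification of independent thinning, because on the surface the battery states $\{\mathsf{P}_k^{(A)}\}_k$ could appear coupled through the shared process $\Phi_{MT}$. The cleanest way to present this is to attach to each scBS$_k$ the mark consisting of its harvesting sample path together with its assigned MT sub-PPP; by the independent assignment induced by Definition \ref{def:closest_power} and the restriction property of the PPP over disjoint serving regions, these marks are independent across $k$, and the thinning theorem applies directly. All remaining steps are routine indicator bookkeeping and index manipulation.
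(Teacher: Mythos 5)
Your proof is correct and follows essentially the same route as the paper, which simply invokes the thinning property of PPPs and leaves the independence of the retention indicators and the integration to the closed form implicit. Your elaboration of the conditional independence of the battery states and the indicator rewriting $\mathbb{P}_{\text{B}}^{(\text{A})}(p') = \sum_{l=0}^{\mathsf{L}} v_l\,\mathbf{1}\{p' \leq p_l^{(\text{cov})}\}$ followed by term-wise integration is a faithful, more detailed version of the argument the paper sketches in its post-lemma interpretation.
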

\begin{proof}
This Lemma directly follows from the thinning property of PPPs and from \cite[Prop.1.3.5]{book_sto_geo}.
\end{proof}

This expression of $\Lambda_{B}^{(A)}$ can be interpreted as follows. Any scBS for which the transmit power required by MT$^{(0)}$ is equal to $p \in \left[0, p_{l^{\star}}^{(\text{cov})} \right]$ is declared available if the power buffer state is greater than $l^{\star}$. This gives the term $ \Lambda_{B} (p) \underset{l = l^{\star} }{\overset{\mathsf{L}}{\sum}} v_{l} $. However, any scBS for which $p \in \left[0, p_{l^{\star}-1}^{(\text{cov})} \right]$ can be declared also available if the power buffer is in state $l^{\star}-1$. This gives the additional term $v_{l^{\star}-1} \Lambda_{B} (p_{l^{\star}-1}^{(\text{cov})})$, etc.

\subsection{Probability to be associated with the typical scBS}
\label{sec:Association_analysis}

By Definition \ref{def:closest_power}, the probability $\mathbb{P}_B^{(\text{Ass})}(p \vert l)$ that a MT requiring a power $p$ is associated with the typical small-cell base station sc-BS$^{(0)}$, given that $l$ power units are available in its battery, is equal to the probability that scBS$^{(0)}$ is the "closest" available base station in terms of transmit power $p$. 
With $\Lambda_{B}^{(A)} (p)$ the density of available scBSs, it follows from the void probability of PPPs that
\begin{align}
\mathbb{P}_B^{(\text{Ass})}(p \vert l) = \left \lbrace
\begin{array}{l l}
\exp \left(- \Lambda_{B}^{(A)} (p) \right) & p \leq p_{l}^{(\text{cov})}
\\
0 & \text{otherwise}
\end{array}
\right.
\end{align}

\section{Analysis of the proposed cell association: \\ the BS's perspective} 
\label{sec:performance_analysis_BS}

This section switches to the perspective of a BS and characterizes the set of served MTs and the battery state.

\subsection{A useful theorem: sum of transmit power }

Before moving onto the analysis of the proposed cell association, we propose a general and useful theorem:
\begin{theorem}
	Assume that the set of the transmit powers $p_{0j}$ from the typical base station scBS$^{(0)}$ to its served MTs form a PPP denoted as $\Phi$ and of density $\Lambda(p)$. In addition, assume that $p_{0j} \leq P \in \mathbb{N}, \forall j $. Then, the probability of the discretized total power consumption at scBS$^{(0)}$ denoted as $\mathbb{P}_{\Sigma} \left(m,\Lambda,P\right)=
	\mathbb{P}\left( \underset{j \in \Phi}{\sum} \lceil p_{kj} \rceil = m \right) $ is found recursively as follows:
	\begin{align*}
	& \left \lbrace \begin{array}{rl}
	\mathbb{P}_{\Sigma} \left(0,\Lambda,P\right) %
	&=\exp \left(-\Lambda (P) \right) 
	\\
	\mathbb{P}_{\Sigma} \left(m,\Lambda,P\right)&= \underset{q=1}{\overset{m^{\star}}{\sum}}  \frac{q}{m} \; C_q \;  \mathbb{P}_{\Sigma}(m-q,\Lambda,P)
	\end{array} \right.
	\end{align*}
	where  $\quad C_{q} = \Lambda (q) - \Lambda (q-1)$ and $m^{\star} = \min(m, P)$.
	\label{prop:useful_lemma}
\end{theorem}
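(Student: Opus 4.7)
The plan is to recognize the discretized total consumption $S = \sum_{j \in \Phi} \lceil p_{0j} \rceil$ as a compound Poisson random variable on $\{0,1,\ldots\}$, and then obtain the stated recursion from the logarithmic derivative of its probability generating function. This is essentially a Panjer/Adelson-type recursion, adapted to the PPP setting via the mapping theorem.

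First I would apply the mapping theorem to the PPP $\Phi$ of intensity measure $\Lambda$ on $[0,P]$ under the rounding map $p \mapsto \lceil p \rceil$, which takes values in $\{1,\ldots,P\}$. The image is a Poisson random measure on this finite set whose atom at $q$ has mass $\Lambda(q)-\Lambda(q-1)=C_q$. Equivalently, the counts $N_q = \#\{j\in\Phi : \lceil p_{0j}\rceil = q\}$ are mutually independent with $N_q\sim\mathrm{Poisson}(C_q)$. Consequently $S=\sum_{q=1}^{P} q N_q$ and, using the PGF of a Poisson random variable,
\begin{equation*}
G_S(z) \;=\; \mathbb{E}[z^S] \;=\; \prod_{q=1}^{P} \exp\!\bigl(C_q(z^q-1)\bigr) \;=\; \exp\!\left(\sum_{q=1}^{P} C_q(z^q-1)\right).
\end{equation*}
The base case follows immediately from the void probability: $\mathbb{P}_\Sigma(0,\Lambda,P) = \mathbb{P}(\Phi=\emptyset) = \exp(-\Lambda(P))$.

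Next I would differentiate and multiply by $z$ to obtain the functional identity
\begin{equation*}
z\,G'_S(z) \;=\; G_S(z)\,\sum_{q=1}^{P} q\,C_q\, z^q,
\end{equation*}
and match the coefficients of $z^m$. On the left-hand side the coefficient is $m\,\mathbb{P}_\Sigma(m,\Lambda,P)$; on the right the Cauchy product of the two series gives $\sum_{q=1}^{\min(m,P)} q\,C_q\,\mathbb{P}_\Sigma(m-q,\Lambda,P)$. Dividing by $m$ yields precisely the claimed recursion, with the upper summation limit $m^\star=\min(m,P)$ arising because $C_q=0$ for $q>P$ and $\mathbb{P}_\Sigma(m-q,\Lambda,P)=0$ for $q>m$.

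There is no real obstacle here, since the compound-Poisson structure is classical. The only two points requiring care are (i) invoking the mapping/marking theorem to guarantee joint independence of the $N_q$'s (so that the PGF factorizes cleanly), and (ii) justifying the coefficient extraction, which is valid because $G_S$ is entire in $z$ as a finite-sum exponential so the formal Cauchy-product identity is rigorous term by term.
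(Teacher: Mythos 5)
Your proof is correct and follows essentially the same route as the paper: both establish the probability generating function $G_S(t)=\exp\bigl(\sum_{q=1}^{P}C_q(t^q-1)\bigr)$ from the PPP structure and then read off the recursion from its derivative. The only cosmetic difference is that the paper extracts the recursion by inductively differentiating at $t=0$ (a Leibniz-rule argument), whereas you match coefficients in $z\,G_S'(z)=G_S(z)\sum_q qC_qz^q$; these are the same identity.
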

\begin{proof}
	See Appendix \ref{appendix:useful_lemma}.
\end{proof}

In Theorem \ref{prop:useful_lemma}, $C_q$ refers to the average number of MTs requiring exactly $q$ units of energy for data transmission (after rounding up). The probability $\mathbb{P}_{\Sigma} \left(0,\Lambda,P\right)$ that the scBS does not consume energy at all is equal to the probability that there is no MT satisfying $p_{0j} < P$. 

\textit{Remark: } This theorem has wide application and allows to compute the transmit power of a multi-user transmission where power is minimized given path-loss and shadowing. It is used in this work to analyze the set of served MTs and the interference received at MTs.

\subsection{Analysis of the set of served MTs}
\label{sec:Service_analysis}

\begin{figure*}
\begin{align}
\Lambda_{MT}^{(S)} (0 \; \vert \; l) = 0
\quad \text{and} \quad
\Lambda_{MT}^{(S)} (p \; \vert \; l) = \Lambda_{MT}^{(S)} (p_{l^{\star}-1}^{(\text{cov})} \; \vert \; l) + \frac{\lambda_{MT}}{\lambda_{B}} \frac{1}{\underset{l=l^{\star}}{\overset{\mathsf{L}}{\sum}} v_l}
\left[\exp \left(- \Lambda_{B}^{(A)} (p_{l^{\star}-1}^{(\text{cov})} ) \right)  - \exp \left(- \Lambda_{B}^{(A)} (p) \right) \right]
 \label{eq:lambda_MT_S_conditional} 
\end{align}
\hrule
\end{figure*}

As described in Definition \ref{def:selection_rule}, the typical base station scBS$^{(0)}$ selects MTs in increasing order of required transmit power based on the \emph{effective} set of associated MTs, while cell association is performed by MTs based on an \emph{estimate} set of associated MTs (which depends on the estimate power $\widetilde{\mathsf{P}}_{0\setminus j}^{(T)}$ $\forall j$). As a consequence, there exists a non-zero probability that a MT associated with the typical scBS cannot be effectively served due to power shortage. %
MT$_j$ requiring a transmit power of $p_{0j}$ is rejected by scBS$^{(0)}$ if the current realization of the PPP network and channel gains leads to $ p_{0j} +  \sum_{i \in \mathcal{S}_{0\setminus j}} \lceil p_{0i} \rceil > \mathsf{P}_0^{(A)} $, i.e. if $\widetilde{\mathsf{P}}_{0 \setminus j}^{(T)}$ has been underestimated.
However, the rejection probability is negligible and we propose the following approximation, which is valid only for the proposed cell association.

\begin{approximation}
The subset $\mathcal{S}_0$ of MTs served by the typical small-cell base station scBS$^{(0)}$ is approximated by the subset of MTs associated to it.
\label{approx:Lambda_MT_S}
\end{approximation}
Such approximation upper-bounds the set of served MTs and states that the proposed cell association guarantees service to any associated MT.

\begin{lemma}
Given that $l$ power units are available in the battery, the subset $\mathcal{S}_0$ of MTs served by scBS$^{(0)}$ forms a PPP on $\left[0,p_{l}^{(\text{cov})}\right]$, with:
\begin{align}
d\Lambda_{MT}^{(S)} (p \; \vert \; l) &= d\Lambda_{MT} (p) \mathbb{P}_B^{(\text{Ass})}(p \vert l) \nonumber\\
&= \lambda_{MT} \frac{2}{\alpha} \Upsilon p^{\frac{2}{\alpha}-1} \exp \left(- \Lambda_{B}^{(A)} (p) \right) 
\end{align}
The closed-form expression of $\Lambda_{MT}^{(S)} (p \vert l)$ is computed by integration, as given in Eq. \eqref{eq:lambda_MT_S_conditional} at the top of page.
\end{lemma}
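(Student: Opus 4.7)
The plan is to prove both parts of the lemma in sequence: first establish the PPP structure of $\mathcal{S}_0$ with the claimed intensity measure, then integrate to obtain the closed-form expression piecewise.

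For the first part, I would invoke Approximation \ref{approx:Lambda_MT_S} so that the served set coincides with the associated set. By the displacement theorem applied to $\Phi_{MT}$, the powers $p_{0j}$ seen from scBS$^{(0)}$ form a non-homogeneous PPP on $\mathbb{R}^+$ of intensity measure $\Lambda_{MT}(p) = \lambda_{MT} \Upsilon p^{2/\alpha}$ (already established in the preceding section). Since (i) the scBSs, MTs and shadowing fields are mutually independent, and (ii) each MT performs its availability check and effective association independently of the others, each atom of $\Phi_{MT}^\star$ at power $p$ is independently retained in $\mathcal{S}_0$ with probability $\mathbb{P}_B^{(\mathrm{Ass})}(p\vert l)$. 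The thinning theorem for PPPs then yields the intensity $d\Lambda_{MT}^{(S)}(p\vert l) = d\Lambda_{MT}(p)\,\mathbb{P}_B^{(\mathrm{Ass})}(p\vert l)$, which vanishes for $p > p_l^{(\mathrm{cov})}$ because $\mathbb{P}_B^{(\mathrm{Ass})}(p\vert l)=0$ there. Substituting the closed form $\mathbb{P}_B^{(\mathrm{Ass})}(p\vert l) = \exp(-\Lambda_B^{(A)}(p))$ and the derivative $d\Lambda_{MT}(p) = \lambda_{MT}\Upsilon(2/\alpha)p^{2/\alpha-1}dp$ gives the displayed density.

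For the closed-form expression, I would exploit the piecewise-affine structure of $\Lambda_B^{(A)}$ in the variable $\Lambda_B(p)$. On any subinterval $p \in (p_{l^\star-1}^{(\mathrm{cov})}, p_{l^\star}^{(\mathrm{cov})}]$ the integer $l^\star$ is constant, so Eq.~\eqref{eq:Lambda_B_A} reduces to
\begin{equation*}
\Lambda_B^{(A)}(p) = C_{l^\star} + S_{l^\star}\,\Lambda_B(p), \qquad C_{l^\star} := \sum_{l=0}^{l^\star-1} v_l \Lambda_B(p_l^{(\mathrm{cov})}),\quad S_{l^\star} := \sum_{l=l^\star}^{\mathsf{L}} v_l.
\end{equation*}
Using $d\Lambda_{MT}(p) = (\lambda_{MT}/\lambda_B)\,d\Lambda_B(p)$ and the change of variable $u = \Lambda_B(p)$, the integral of the density over this subinterval becomes
\begin{equation*}
\int_{p_{l^\star-1}^{(\mathrm{cov})}}^{p} e^{-\Lambda_B^{(A)}(p')}\, d\Lambda_{MT}(p') = \frac{\lambda_{MT}}{\lambda_B}\,e^{-C_{l^\star}} \int_{\Lambda_B(p_{l^\star-1}^{(\mathrm{cov})})}^{\Lambda_B(p)} e^{-S_{l^\star} u}\,du,
\end{equation*}
which evaluates in closed form and, after recombining $C_{l^\star}+S_{l^\star}\Lambda_B(\cdot) = \Lambda_B^{(A)}(\cdot)$ at both endpoints, yields exactly the bracket in Eq.~\eqref{eq:lambda_MT_S_conditional}. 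Chaining these piecewise contributions starting from $\Lambda_{MT}^{(S)}(0\vert l)=0$ gives the recursive form in the statement.

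The main obstacle I anticipate is bookkeeping around the discontinuities of $l^\star$ and $\Lambda_B^{(A)}$: one must verify that the recursion is consistently initialised, that $\Lambda_B^{(A)}$ is continuous at each breakpoint $p_{l^\star-1}^{(\mathrm{cov})}$ so the two forms of the integrand match there, and that the factor $1/\sum_{l=l^\star}^{\mathsf{L}} v_l$ is well defined (which it is as long as the steady-state vector $\mathbf{v}$ has some mass in states $\geq l^\star$, a condition guaranteed for any $p \leq p_{\mathsf{L}}^{(\mathrm{cov})}$). All the other steps (thinning, change of variable, elementary integration of an exponential) are routine once this piecewise structure is handled.
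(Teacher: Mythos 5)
Your proposal is correct and follows essentially the route the paper intends: Approximation \ref{approx:Lambda_MT_S} to identify served with associated MTs, independent thinning of the displaced PPP $\Phi_{MT}^{\star}$ with retention probability $\mathbb{P}_B^{(\text{Ass})}(p\,\vert\, l)$, and piecewise integration of the resulting density using the fact that $\Lambda_B^{(A)}$ is affine in $\Lambda_B(p)$ on each interval $(p_{l^{\star}-1}^{(\text{cov})}, p_{l^{\star}}^{(\text{cov})}]$, which recovers Eq.~\eqref{eq:lambda_MT_S_conditional} exactly. Your explicit verification of the continuity of $\Lambda_B^{(A)}$ at the breakpoints and of the change of variable $u=\Lambda_B(p)$ supplies details the paper leaves implicit, but the argument is the same.
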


\subsection{Characterization of the battery state probability vector}
\label{sec:Battery_analysis}

As $\Lambda_{MT}^{(S)} (p \; \vert \; l)$ refers to the density of $\mathcal{S}_0$ given that there are $l$ power units in the battery, i.e. the subset of MTs served by scBS$^{(0)}$ at time slot $t$, we apply Theorem \ref{prop:useful_lemma} to compute the total power consumption as follows:
\begin{align}
\mathbb{P}_{\text{T}}(m \; \vert \; l ) = & \left \lbrace
\begin{array}{ll}
 \frac{\mathbb{P}_{\Sigma} \left(m,\Lambda_{MT}^{(S)}(p \; \vert \; l),\lfloor p_{l}^{(\text{cov})} \rfloor\right) }{ \mathbb{P}_{\text{sum}} (l) } & \; \text{if} \; m \leq l \\
0 & \; \text{otherwise}.
\end{array}\right.
\label{eq:prob_T_l}
\\
\text{where } \; \mathbb{P}_{\text{sum}} (l) = & \sum_{m=1}^{l} \mathbb{P}_{\Sigma} \left(m,\Lambda_{MT}^{(S)}(p \; \vert \; l),\lfloor p_{l}^{(\text{cov})} \rfloor\right)
\nonumber 
\end{align}
and $p_{l}^{(\text{cov})}$ refers to the maximum transmit power that can be required by a MT of $\mathcal{S}_0$, as given in Eq. \eqref{eq:p_cov}.
Next, we compute the matrix $\mathbf{P}$. Given the battery model described in Subsection \ref{sec:model_buffer} and depicted in Figure \ref{fig:battery_model}, the probability to go from state $l$ to state $q$  from a time slot to the next one is given by:
\begin{align}
\forall l 
\left \lbrace
\begin{array}{ll}
\mathbb{P}_{\overrightarrow{lq}} =
\sum_m \mathbb{P}_{\text{T}}\left( m \; \vert \; l \right) \mathbb{P}_{\text{H}}\left( q-l+m \right)
& \quad \forall q \neq \mathsf{L} 
\\
\mathbb{P}_{\overrightarrow{l \mathsf{L} }} =
\sum_m \mathbb{P}_{\text{T}}\left( m \; \vert \; l \right) \sum_{q \geq \mathsf{L}-l+m} \mathbb{P}_{\text{H}}\left(q \right)
&
\end{array}
\right.
\label{eq:P_ij}
\end{align}
The case $q=\mathsf{L}$ accounts for the finite battery capacity: no more than $\mathsf{L}$ power units can be stored, whatever the amount of harvested power.

Finally, we propose to numerically solve the steady-state system of Eq. \eqref{eq:steady_state} and obtain $\mathbf{v}$ by applying the algorithm in Table \ref{algo:steady_state}. This algorithm is directly inspired from the fixed-point method commonly used to solve systems of the form $x = f(x)$. 
If $\mathbf{v}^{(0)}$ represents a first guess for the solution, successive approximations to the solution are computed as $\mathbf{v}^{(n+1)} =\mathbf{v}^{(n)} \mathbf{P}$.
Following the Doeblin's theorem for Markov chains, such an algorithm is known to converge to the stationary distribution for all initial distributions.
In our case, it converges after 10 to 20 iterations only.

\begin{table}
\caption{Fixed-point method to obtain  $\mathbf{v} = \left[ v_0 v_1 \ldots v_{\mathsf{L}} \right]$}
\label{algo:steady_state}
\hrule
\begin{algorithmic}[1]
\STATE {$v_l = \frac{1}{\mathsf{L}+1} \quad \forall l$ \hfill \textit{\% Equiprobable states} }
\STATE {$\Delta = 1 > \delta$ \hfill \textit{\% Stopping condition, $\delta = 10^{-10}$ here} }
\STATE {Compute $\mathbb{P}_H(m)$ as in Eq. \eqref{eq:prob_H}}
\WHILE{$\Delta > \delta$}
	\STATE{Compute $\Lambda_{B}^{(A)} (p)$ using Eq. \eqref{eq:Lambda_B_A}}
	\STATE{Compute $\Lambda_{MT}^{(S)} (p \; \vert \; l)$ using Eq. \eqref{eq:lambda_MT_S_conditional}}
	\STATE{Compute $\mathbb{P}_T(m  \; \vert \; l)$ as in Eq. \eqref{eq:prob_T_l}} 
	\STATE{Compute $\mathbf{P} = \left[ \mathbb{P}_{\overrightarrow{lq}}\right]$ as in Eq. \eqref{eq:P_ij}} 
	\STATE {$\Delta = \mathbb{E} \left[ \left( \mathbf{v} - \mathbf{v} \mathbf{P} \right)^2 \right]$ \hfill \textit{\% Mean squared error}}
	\STATE {$\mathbf{v} = \mathbf{v} \mathbf{P}$, ensuring $\sum_l v_l = 1$}
\ENDWHILE
\RETURN {$\mathbf{v}$}
\end{algorithmic}
\hrule
\end{table}

\section{Performance of the proposed association} 
\label{sec:performance_analysis_final}

We now analyze the performance achieved by the proposed energy-aware cell association.

\subsection{Probability of power outage} 
\label{sec:outage_proba}

A power outage occurs either when a MT cannot find any available small-cell base station or when it is associated but dropped due to power shortage. 
With regards to the proposed association, this second event can be neglected. As highlighted in Approximation \ref{approx:Lambda_MT_S}, MTs have very low probability to be dropped once associated.

\begin{proposition}
The probability of power outage $\mathbb{P}_{\text{out}}^{(A)}$ of the proposed cell association is expressed as
\begin{align*}
\mathbb{P}_{\text{out}}^{(A)} = \exp \left(- \Lambda_B^{(A)}\left( p_{\mathsf{L}}^{(\text{cov})} \right) \right) 
\end{align*}
where $\Lambda_B^{(A)}$ is defined in Eq. \eqref{eq:Lambda_B_A} and $p_{\mathsf{L}}^{(\text{cov})}$ is the maximal power that can be required by a MT, solving Eq.\eqref{eq:p_cov}. 
\label{prop:power_outage} 
\end{proposition}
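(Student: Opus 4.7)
The plan is to reduce the outage event to an empty-set event on the point process of available base stations and then invoke the void probability of a Poisson point process.

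First I would pin down the definition of a power outage. According to the discussion at the start of Subsection \ref{sec:outage_proba}, a MT is in power outage if either (i) it finds no available scBS in $\mathcal{A}_0$, or (ii) it is associated but subsequently dropped by the MT selection rule of Definition \ref{def:selection_rule}. Event (ii) is neglected thanks to Approximation \ref{approx:Lambda_MT_S}, which upper-bounds the set of served MTs by the set of associated MTs. So up to this approximation, $\mathbb{P}_{\text{out}}^{(A)}$ equals the probability that the typical MT$^{(0)}$ sees an empty availability set, i.e.\ $\mathbb{P}(\mathcal{A}_0 = \emptyset)$, which by Slivnyak's theorem is what we need to compute.

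Next I would exploit the structure established in Subsection \ref{sec:Availability_analysis}. Lemma~1 shows that $\mathcal{A}_0$, viewed through the power coordinate $p$, is a Poisson point process on $[0, p_{\mathsf{L}}^{(\text{cov})}]$ with intensity measure $\Lambda_B^{(A)}$. Any scBS requiring a transmit power $p > p_{\mathsf{L}}^{(\text{cov})}$ is necessarily unavailable, since even a fully charged battery of $\mathsf{L}$ units cannot satisfy the availability criterion $g_A(p) \le \mathsf{L}$; hence the support of the availability process is exactly $[0,p_{\mathsf{L}}^{(\text{cov})}]$.

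Finally I would apply the void probability of a PPP: for a Poisson process with intensity measure $\Lambda_B^{(A)}$ on a Borel set $B$, one has $\mathbb{P}(\text{no point in } B) = \exp(-\Lambda_B^{(A)}(B))$. Taking $B = [0, p_{\mathsf{L}}^{(\text{cov})}]$ yields exactly
\[
\mathbb{P}_{\text{out}}^{(A)} \;=\; \mathbb{P}(\mathcal{A}_0 = \emptyset) \;=\; \exp\!\bigl(-\Lambda_B^{(A)}(p_{\mathsf{L}}^{(\text{cov})})\bigr),
\]
which is the claim.

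There is no real technical obstacle here: the machinery (displacement to the power coordinate, thinning by $\mathbb{P}_B^{(A)}$, stationarity of the battery chain, and void probabilities) has all been set up in the preceding sections. The only delicate point worth stating explicitly in the proof is the justification for discarding event (ii) above; everything else is a direct application of Lemma~1 and the standard PPP void-probability formula.
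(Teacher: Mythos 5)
Your proposal is correct and follows essentially the same route as the paper: discard the ``associated but dropped'' event via Approximation~\ref{approx:Lambda_MT_S}, observe that no scBS with $p_{k0}>p_{\mathsf{L}}^{(\text{cov})}$ can be available, and apply the void probability of the thinned PPP of available base stations with intensity measure $\Lambda_B^{(A)}$ on $\left[0,p_{\mathsf{L}}^{(\text{cov})}\right]$. Your write-up is in fact slightly more explicit than the paper's, which relegates the justification for neglecting the rejection event to the paragraph preceding the proposition rather than to the proof itself.
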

\begin{proof}
First, given that BSs cannot store more than $\mathsf{L}$ power units, the maximum power that can be required by a MT is equal to $ p_{\mathsf{L}}^{(\text{cov})}$, regardless of the current BS battery state. Thus, any BS$_k$ for which $p_{k0} > p_{\mathsf{L}}^{(\text{cov})}$ is declared non-available. 
Second, not all BSs for which the required transmit power satisfies $p_{k0} \leq p_{\mathsf{L}}^{(\text{cov})}$ is available, since they may lack of power. Accounting for the random battery fluctuations, their density is reduced from $\Lambda_B$ to $\Lambda_B^{(A)}$. Finally, using the void probability of PPPs \cite{book_sto_geo}, we deduce Proposition \ref{prop:power_outage}.
\end{proof}

\subsection{Coverage probability} 
\label{sec:SINR}

The coverage probability $\mathbb{P}_\gamma$ is defined as the probability that the signal-to-interference ratio (SIR) at MT$^{(0)}$ exceeds a given reliability threshold $T$. While the availability criteria and power outage probability only depend on path-loss and shadowing, this performance metric accounts for fast fading and interference as well.

We denote as $N_{RB}$ the number of resource blocks available for data transmission at each small-cell BS. They are assumed picked at random, each with equal probability $\frac{1}{N_{RB}}$. First, the expected transmit power of scBS$_k, \forall k$ in the resource block used by MT$^{(0)}$ is equal to $\frac{\mathsf{P}_k^{(T)} }{ N_{RB}}$,  with $\mathsf{P}_k^{(T)}$ the total transmit power of scBS$_k$. Denoting scBS$_0$ the small base station serving MT$^{(0)}$, the interference $I$ received at MT$^{(0)}$ is computed as:
\begin{align}
I & = \underset{\text{scBS}_k \in \Phi_B \setminus \text{scBS}_0 }{\sum} \frac{\mathsf{P}_k^{(T)} }{ N_{RB}} \frac{\chi_{k,0}}{l(r_{k,0})}\vert h_{k,0} \vert^2
\\
& = \sum_{m=1}^{\mathsf{L}} \sum_{l=m}^{\mathsf{L}} {\sum}_{\Phi_{m,l}^{(I)}} \frac{m}{ N_{RB}} \frac{\chi_{k,0}}{l(r_{k,0})}\vert h_{k,0} \vert^2,
\end{align}
where $m$ refers a total transmit power, $l$ to a BS battery state and $\Phi_{m,l}^{(I)}$ refers to the set of interfering scBSs having $l$ units of energy stored in their battery and transmitting with $m \leq l$ power units. Note that the $\Phi_{m,l}^{(I)}$ form a partition of the interfering scBSs, as $\Phi_B \setminus \text{scBS}_0  = \underset{m=0}{\overset{\mathsf{L}}{\bigcup}} \; \underset{l=m}{\overset{\mathsf{L}}{\bigcup}} \Phi_{m,l}^{(I)}$.

\begin{proposition}
The coverage probability of the proposed cell association is given by:
\begin{align*}
& \mathbb{P}_\gamma^{(A)}  = 
\prod_{m=1}^{\mathsf{L}}   \prod_{l=m}^{\mathsf{L}} \exp \left( - 
\rho_{m,l} \lambda_{B} \Upsilon_m \frac{2}{\alpha}
\left(\frac{T}{\mathsf{P_{Rx}} }\right)^{\frac{2}{\alpha}}
\int_{u_{m,l}}^{\infty} \frac{u^{2/\alpha -1}}{1+u}du
\right)
\\
& \text{where} \quad \left \lbrace \begin{array}{l l}
\Upsilon_m &= \pi \left(\frac{m}{\kappa N_{RB} }\right)^{\frac{2}{\alpha}} 
\exp \left(  \frac{2/\alpha}{\zeta} \mu + \frac{1}{2} \left(\frac{2/\alpha}{\zeta}\right)^2 \sigma^2 \right) 
\\
\rho_{m,l} &= v_l \mathbb{P}_{\text{T}}(m \; \vert \; l) 
\\
u_{m,l} &= \min \left(\frac{1}{T} , \frac{p_l^{(\text{cov})} N_{RB}}{m T}  \right)
\end{array} \right. 
\end{align*}
In this, $\rho_{m,l} \lambda_{B}$ refers to the density of $\Phi_{m,l}^{(I)}$ and $\mathbb{P}_{\text{T}}(m \; \vert \; l) $ is given by Eq. \eqref{eq:prob_T_l}, using Theorem \ref{prop:useful_lemma}.
\label{prop:coverage_probability}
\end{proposition}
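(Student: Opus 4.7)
The plan is to evaluate $\mathbb{P}_\gamma^{(A)}=\mathbb{P}(\mathrm{SIR}>T)$ via the standard Laplace-transform trick enabled by the Rayleigh fading on the serving link. Because scBS$_0$ sets its transmit power so that the averaged received signal at MT$^{(0)}$ equals $\mathsf{P_{Rx}}$ by construction (Eq.~\eqref{eq:p_kj}), the instantaneous useful power collapses to $\mathsf{P_{Rx}}\,|h_{0,0}|^2$ with $|h_{0,0}|^2$ exponentially distributed (parameter $\nu$, normalized to $1$ as in the stated formula). Hence
\begin{align*}
\mathbb{P}_\gamma^{(A)}=\mathbb{E}\!\left[e^{- T I/\mathsf{P_{Rx}}}\right]=\mathcal{L}_I\!\left(T/\mathsf{P_{Rx}}\right),
\end{align*}
and the problem reduces to computing the Laplace transform of $I$.

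The second step is to decompose the interferers by $(m,l)$-type. Treating each scBS's battery state as an i.i.d.\ mark with steady-state law $\mathbf{v}$, and conditionally on $l$ its total discretized transmit power as an independent mark with law $\mathbb{P}_{\text{T}}(\cdot\mid l)$ from Eq.~\eqref{eq:prob_T_l}, the independent thinning of $\Phi_B$ makes each $\Phi_{m,l}^{(I)}$ a homogeneous PPP of intensity $\rho_{m,l}\lambda_B=v_l\,\mathbb{P}_{\text{T}}(m\mid l)\,\lambda_B$, and the different types are independent. Combined with the mutual independence of shadowing and fast fading across BSs, this yields the product structure $\mathcal{L}_I(s)=\prod_{m=1}^{\mathsf{L}}\prod_{l=m}^{\mathsf{L}}\mathcal{L}_{I_{m,l}}(s)$ that appears in the statement.

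The third step is to compute each $\mathcal{L}_{I_{m,l}}$ from the Laplace functional of a 2-D PPP. After averaging out the exponential $|h_{k,0}|^2$ (which gives a factor $1/(1+s\,m\chi_{k,0}/(N_{RB}\kappa r_{k,0}^{\alpha}))$) and then integrating the log-normal $\chi_{k,0}$ (producing the moment $\mathbb{E}[\chi^{2/\alpha}]=\exp\bigl((2/\alpha)\mu/\zeta+\tfrac{1}{2}((2/\alpha)/\zeta)^2\sigma^2\bigr)$ that sits inside $\Upsilon_m$), one passes to polar coordinates and substitutes $u=\tfrac{l(r)}{\chi}\cdot\tfrac{\mathsf{P_{Rx}}\,N_{RB}}{Tm}$. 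This collapses the exponent into $\rho_{m,l}\lambda_B\,\Upsilon_m\,\tfrac{2}{\alpha}(T/\mathsf{P_{Rx}})^{2/\alpha}\int\tfrac{u^{2/\alpha-1}}{1+u}du$, which matches the prefactor in the proposition.

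The main obstacle is the justification of the lower endpoint $u_{m,l}$, the non-standard ingredient that distinguishes this calculation from textbook PPP coverage. Under the substitution above, the restriction $l(r_{k,0})/\chi_{k,0}\ge\min(m/N_{RB},\,p_l^{(\text{cov})})/\mathsf{P_{Rx}}$ becomes $u\ge\min(1/T,\,p_l^{(\text{cov})}N_{RB}/(mT))$. The branch $u\ge p_l^{(\text{cov})}N_{RB}/(mT)$ is the translation of the per-user availability ceiling in Eq.~\eqref{eq:p_cov}: an interferer in battery state $l$ cannot have served MT$^{(0)}$, so its effective transmit power to MT$^{(0)}$ satisfies $p_{k,0}>p_l^{(\text{cov})}$. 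The complementary branch $u\ge 1/T$ reflects the implicit conditioning that scBS$_0$, and not scBS$_k$, actually serves MT$^{(0)}$, which caps how close an interferer can lie in the averaged-received-power sense. The minimum picks the binding constraint on each interferer of $\Phi_{m,l}^{(I)}$; once this is in place, assembling the factors $\mathcal{L}_{I_{m,l}}$ over all admissible pairs $1\le m\le l\le\mathsf{L}$ delivers the closed form of Proposition~\ref{prop:coverage_probability}.
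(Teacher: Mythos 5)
Your proposal follows essentially the same route as the paper's proof in Appendix~\ref{appendix:coverage_prob}: expressing $\mathbb{P}_\gamma^{(A)}$ as the Laplace transform of the interference at $s=\nu T/\mathsf{P_{Rx}}$ via the Rayleigh serving link, factorizing over the independent classes $\Phi_{m,l}^{(I)}$, mapping each class to a displaced PPP in the averaged-power domain with intensity $\rho_{m,l}\lambda_B\Upsilon_m z^{2/\alpha}$, and deriving the lower endpoint $u_{m,l}$ from the same dichotomy (availability ceiling $p_{k0}>p_l^{(\text{cov})}$ for unavailable interferers versus the association-induced bound for available ones). The argument and all intermediate quantities match the paper's derivation.
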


\begin{proof}
See Appendix \ref{appendix:coverage_prob}.
\end{proof}
Such expression of the coverage probability is derived from the Laplace transform of the received interference in PPP networks. This stochastic-geometry tool has been widely used in conventional cellular networks with fixed transmit power, as in \cite{Andrews2011_tractable}.
In our case, the BS transmit power is not constant but partitioning the BSs into the sets  $\Phi_{m,l}^{(I)}$ allows to consider a fixed average transmit power equal to $\frac{m}{ N_{RB}}$ for analyzing the received interference.
Another difference with existing frameworks dedicated to non energy-harvesting networks lies in the power received from the closest interfering scBS, measured by $\frac{\mathsf{P_{Rx}}}{u_{m,l} T}$. Whereas an interferer cannot be closer than $r_0$ (users associate with the strongest BS), it is not the case in our framework due to the availability criteria. Indeed, a nearby BS may be declared unavailable for a given user, which then has to associate with a farther BS.

\textit{Remark:} Theorem \ref{prop:useful_lemma} allows to discard the assumption of full-power transmission with saturated traffic load in the computation of the received interference, as commonly used for tractable analysis \cite{Dhillon2014_Fundamentals, Yu2015}. With our scenario and considered network model, such assumption would imply that the transmit power model for the battery discharge (which accounts for the potentially low number of served MTs) does not match the transmit power model for the received interference (which gives a worst-case result for the coverage probability). In the proposed analysis, the effective inter-cell interference is computed based on the battery level and power requirements of associated MTs. Thus, a scBS with low battery level and serving few MTs also generates less interference.

\section{Simulations and Performance results}  
\label{sec:simulations}

We first validate the proposed analysis and then, present the performance results obtained for the power-availability-aware cell association, with a particular focus on the power outage. %

\subsection{Considered reference cell association strategies}
\label{sec:sim_setup_ref}

In the following, the proposed cell association is referred as "$\mathcal{A}$".
As first reference,
 we consider the performance achieved when scBSs are connected to the power grid and do not require energy harvesting. As for the proposed strategy, we assume that the transmit power is minimized and that users associate with the scBS consuming the least power. A maximum transmit power constraint $\mathsf{P_{max}^{(OG)}}$ is assumed for data transmission towards each user. Such scheme is referred as "\textit{on-grid}" and gives performance upper bound. %

As second reference scheme,
we consider a more conventional association policy with energy harvesting but without power availability checking. Similarly to $\mathcal{A}$, a MT associates to the scBS for which the required transmit power is minimal, as in Definition \ref{def:closest_power}, but the battery level is not considered and no pre-selection of available base stations is performed. In this case, $\mathcal{A}_j = \Phi_{B}$ for all $j$. This scheme is referred as  "\textit{w/o}$\mathcal{A}$".

As third reference,
we consider a cell association policy similar to the proposed one, with power availability criteria and minimal-power association, but where the BSI (battery state information) can be known in real-time, at any instant. In this case, a MT performs cell association knowing the effective instantaneous battery level of each small-cell base station. 
The power availability criteria is thus reduced to $p_{kj} \leq \mathsf{P}_k^{(A)}$ and, contrary to the proposed strategy, MTs are served in a first-come first-served basis and not in ascending order of required power, as no non-causal information is assumed. Such scheme is denoted as "\textit{rt-}$\mathcal{A}$".

\textit{Remark:} Performance upper-bounds can be obtained by considering centralized cell association, where the scBS-MT pairing is computed at each instant with global instantaneous knowledge of channel and battery states. Yet, such optimized algorithms require large amount of overhead signaling. The resulting extra power consumption may rapidly depletes the BS battery but is generally omitted for performance analysis. Computing the optimal solution within the considered framework and proposing an optimal/suboptimal algorithm for it largely exceeds the scope of this paper.

\begin{table}
	\renewcommand{\arraystretch}{1.3}
	\caption{Default simulation parameters}
	\label{table:simulation}
	\centering
	\begin{tabular}{|c|c||c|c||c|c|}
		\hline 
		\multicolumn{2}{|c||}{Channel} &
		\multicolumn{2}{|c||}{scBS} &
		\multicolumn{2}{|c|}{MT} \\[3pt]
		\hline
		$\kappa$ / $\alpha$ & 1 /4 &
		$\lambda_{BS}$ & $\frac{1}{\pi 60^2}$ &
		$\lambda_{MT}$ & $\frac{15}{\pi 60^2}$ \\[3pt]
		
		$\mu$ / $\sigma$ & 0 / 4dB &
		$\mathsf{L}$ / $\mathsf{P}_{\max}$ & 1000 / 1W &
		$\mathsf{P_{Rx}}$ & -65dBm  \\[3pt]
		
		$\nu$ & 1 &
		$\lambda_e$ / $N_e$ & 10\%$\mathsf{L}$ / 1 &
		$\mathsf{P_{max}^{(OG)}}$ & 50mW  \\[3pt]
		\hline
	\end{tabular}
\end{table}

 \subsection{Simulation setup and Analysis validation}

To simulate the network performance and validate the proposed analysis, we use Monte-Carlo simulations. Each simulation trial first consists in generating the base station PPP in a finite window. The battery level at each BS is initially set as random, taken from a uniform distribution. Then, a sufficiently large number %
 of sub-trials is simulated to model the BSI broadcast periods.
Each of them consists in generating the energy arrivals at each scBS and the user PPP in the considered window. Next, cell association is performed, the battery level is updated following the effective harvested and consumed energy and both outage and coverage are evaluated for this sub-trial. Note that, to avoid any result distortion due to the initial conditions, a transition period is simulated at the beginning of each trial such that each battery reaches its stationary state. 
Finally, the outage and coverage probabilities are computed by averaging over 10,000 such trials. If not specified, we consider the simulation parameters of Table \ref{table:simulation}.

Before moving onto performance results, we validate the analysis proposed in this paper. The outage and coverage probabilities obtained by Monte-Carlo simulations are depicted in Figures \ref{fig:outage_Prob_BS}, \ref{fig:outage_PRx} and \ref{fig:coverage_Prob} for a wide range of simulation parameters.
The plots show a complete agreement between the analytical and simulated results. In particular, Figures \ref{fig:outage_Prob_BS} and \ref{fig:outage_PRx} depict both the simulated probability that MT$_0$ cannot associate with any scBS (namely, "simul. 1-$\mathbb{P}_B^{(\text{Ass})}	$" in the Figure) and the effective simulated power outage probability (namely, "simul. $\mathbb{P}_{\text{out}}^{(A)}$"), which also includes the cases when MT$_0$ is associated with a scBS but not selected for service. The good match of both of them with the power outage as computed in Proposition \ref{prop:power_outage} validates the proposed analysis and Approximation \ref{approx:Lambda_MT_S}.

\subsection{Performance gain reached by the proposed cell association}
\label{sec:sim_results}

\subsubsection{Robustness to bursty energy arrivals}
Along with its uncertainty, the energy arrival process is characterized by its irregularity, which may have a highly detrimental effect on the network performance.
As first part of the simulation results, we evaluate the robustness of the proposed association against bursty arrivals. To do so, we consider that energy is harvested at rate $\frac{\lambda_e}{N_e}$ over one time slot and that each arrival consists of $N_e \tau$ energy units. When $N_e$ increases, energy is harvested less frequently, but each burst carries more power units.

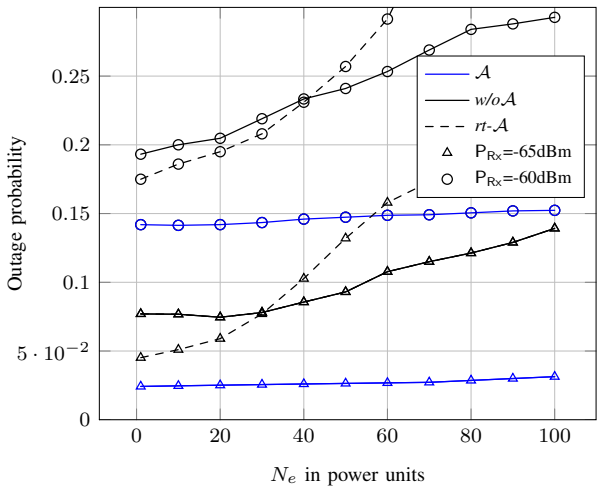
\begin{figure}
\centering
\centering \resizebox{0.9\columnwidth}{!}{%
\begin{tikzpicture}[scale=4,cap=round,>=latex]

  \begin{axis}[%
  ymin=0, ymax = 0.3,
  ytick= {0, 0.05 , ..., 0.3},
	xlabel= $N_e$ in power units,%
    ylabel= Outage probability,%
    y label style={at={(axis description cs:0.05,0.5)},rotate=0,anchor=south},
	every axis/.append style={font=\footnotesize},  
    grid=major,%
    legend style={nodes=right,font=\scriptsize,anchor=north west, at={(axis cs: 67,0.265)}},%
    mark size=2pt]

\addplot[solid,color=blue, line width=0.5] table[x=NEH ,y=outage ,col sep=semicolon] {./Prob_Outage_Proposed_forNEH_6510.txt};
\addlegendentry{$\mathcal{A}$};

\addplot[solid,color=black, line width=0.5] table[x=NEH ,y=outage ,col sep=semicolon] {./Prob_Outage_Without_forNEH_6510.txt};
\addlegendentry{\textit{w/o}$\mathcal{A}$};

\addplot[dashed,color=black, line width=0.5] table[x=NEH ,y=outage ,col sep=semicolon] {./Prob_Outage_Without_forNEH_6510.txt};
\addlegendentry{\textit{rt-}$\mathcal{A}$};

\addplot[only marks,color=black, line width=0.5, mark=triangle,mark options={solid}, line width=0.5] table[x=NEH ,y=outage ,col sep=semicolon] {./Prob_Outage_Without_forNEH_6510.txt};
\addlegendentry{$\mathsf{P_{Rx}}$=-65dBm};

\addplot[only marks,color=black, line width=0.5, mark=o,mark options={solid}, line width=0.5] table[x=NEH ,y=outage ,col sep=semicolon] {./Prob_Outage_Proposed_forNEH_6010.txt};
\addlegendentry{$\mathsf{P_{Rx}}$=-60dBm};

\addplot[solid,color=blue, mark size =2, mark=o,mark options={solid},line width=0.5] table[x=NEH ,y=outage ,col sep=semicolon] {./Prob_Outage_Proposed_forNEH_6010.txt};

\addplot[solid,color=blue, mark size =2, mark=triangle,mark options={solid}, line width=0.5] table[x=NEH ,y=outage ,col sep=semicolon] {./Prob_Outage_Proposed_forNEH_6510.txt};

\addplot[solid,color=black, mark size =2, mark=o,mark options={solid}, mark phase=1,line width=0.5] table[x=NEH ,y=outage ,col sep=semicolon] {./Prob_Outage_Without_forNEH_6010.txt};

\addplot[solid,color=black, mark size =2, mark=triangle,mark options={solid}, mark phase=1,line width=0.5] table[x=NEH ,y=outage ,col sep=semicolon] {./Prob_Outage_Without_forNEH_6510.txt};

\addplot[dashed,color=black, mark size =2, mark=o,mark options={solid}, mark phase=1,line width=0.5] table[x=NEH ,y=outage ,col sep=semicolon] {./Prob_Outage_RealTime_forNEH_6010.txt};

\addplot[dashed,color=black, mark size =2, mark=triangle,mark options={solid}, mark phase=1,line width=0.5] table[x=NEH ,y=outage ,col sep=semicolon] {./Prob_Outage_RealTime_forNEH_6510.txt};

    \end{axis}
\end{tikzpicture}
}
 \caption{Impact of the burstiness of energy arrivals on the power outage}
\label{fig:outage_NEH}
\end{figure}

The outage probabilities obtained for schemes $\mathcal{A}$, \textit{w/o-}$\mathcal{A}$ and \textit{rt-}$\mathcal{A}$ are illustrated in Figure \ref{fig:outage_NEH}.
Significant loss in the power outage is observed for both \textit{w/o-}$\mathcal{A}$ and \textit{rt-}$\mathcal{A}$ when the energy harvesting process is bursty (high values for $N_e$).
As no energy management is performed and users associate with scBSs ignoring their battery level, the reference \textit{w/o-}$\mathcal{A}$ is severely handicapped by long periods without energy arrivals. In addition, more energy is generally wasted at scBSs since energy may be harvested at a scBS and few users may effectively associate to it.
Next, the performance of \textit{rt-}$\mathcal{A}$ is essentially conditioned by the arrival order of the power bursts and MTs service requests, given the considered first-come first-served approach. Receiving large power bursts allows serving users at cell edge or with weak channel, i.e. users with high power requirement, but this rapidly depletes the battery and prevents from serving other users at next time slot.

On the contrary, the proposed power availability criteria is only slightly affected by the burstiness of the energy arrival process. Such gain is further illustrated in Figures \ref{fig:illustration_association_voronoi}(b) and (c).
As traffic is offloaded from scBSs with low battery to scBSs with higher battery and users are served in ascending order of power requirement, the overall available energy is more evenly distributed among base stations to serve users. 
As a consequence and following principles enunciated in \cite{Huang2015}, the proposed availability checking can be considered as a non-direct energy-sharing mechanism within the wireless communication networks, in the same way as node cooperation.

\textit{Remark:} As observed in Figure \ref{fig:outage_NEH}, the case $N_e =1$ (regular energy arrivals) presents the minimal gain of the proposed cell association over both \textit{w/o-}$\mathcal{A}$ and \textit{rt-}$\mathcal{A}$. As it constitutes a lower-bound on the achievable performance gain, it is the only case considered in the following.

\begin{figure}
\centering
\centering \resizebox{0.85\columnwidth}{!}{%
\begin{tikzpicture}[scale=4,cap=round,>=latex]

  \begin{axis}[%
  ymin=-0.01, ymax = 0.68,
	xlabel= {Average scBS coverage radius $\lambda_B =\frac{1}{\pi R^2}$ (in m)},%
    ylabel= Outage probability,%
    y label style={at={(axis description cs:0.1,0.5)},rotate=0,anchor=south},
	every axis/.append style={font=\footnotesize},  
    grid=major,%
    legend style={nodes=right,anchor= north west, font=\scriptsize, at={(axis cs:45,0.68)}},%
    mark size=2pt]

\addplot[solid,color=blue, line width=0.5] table[x=BS ,y=outage ,col sep=semicolon] {./Prob_Outage_Proposed_forBS_6510.txt};
\addlegendentry{$\mathcal{A}$};

\addplot[solid,color=black, line width=0.5] table[x=BS ,y=outage ,col sep=semicolon] {./Prob_Outage_Without_forBS_6510.txt};
\addlegendentry{\textit{w/o}$\mathcal{A}$};

\addplot[dashed,color=black, line width=0.5] table[x=BS ,y=outage ,col sep=semicolon] {./Prob_Outage_RealTime_forBS_6510.txt};
\addlegendentry{\textit{rt-}$\mathcal{A}$};

\addplot[dotted,color=black,line width=0.5] table[x=BS ,y=outage_50 ,col sep=semicolon] {./Prob_Outage_OG_forBS_65.txt};
\addlegendentry{\textit{on-grid} - 50mW};

\addplot[only marks,color=black, mark size =2, line width=0.5, mark=triangle,mark options={solid}, line width=0.5] table[x=BS ,y=outage ,col sep=semicolon] {./Prob_Outage_Without_forBS_6510.txt};
\addlegendentry{$\mathsf{P_{Rx}}$=-65dBm};

\addplot[only marks,color=black, line width=0.5, mark=o,mark options={solid}, line width=0.5] table[x=BS ,y=outage ,col sep=semicolon] {./Prob_Outage_Proposed_forBS_6510.txt};
\addlegendentry{$\mathsf{P_{Rx}}$=-60dBm};

\addplot[dashed,color=green!80!black, mark size =3, mark=otimes,mark options={solid}, line width=0.5] table[x=BS ,y=outageTrueAvailable  ,col sep=semicolon] {./Prob_Outage_Proposed_forBS_6510.txt};
\addlegendentry{simul. 1-$\mathbb{P}_B^{(\text{Ass})}	$};

\addplot[dashed,color=red!80, mark size =3, mark=square,mark options={solid}, line width=0.5] table[x=BS ,y=outageTrue ,col sep=semicolon] {./Prob_Outage_Proposed_forBS_6510.txt};
\addlegendentry{simul. $\mathbb{P}_{\text{out}}^{(A)}$};

\addplot[solid,color=blue, mark size =2, mark=triangle,mark options={solid}, line width=0.5] table[x=BS ,y=outage ,col sep=semicolon] {./Prob_Outage_Proposed_forBS_6510.txt};

\addplot[solid,color=black, mark size =2, mark=triangle,mark options={solid}, line width=0.5] table[x=BS ,y=outage ,col sep=semicolon] {./Prob_Outage_Without_forBS_6510.txt};

\addplot[dashed,color=black, mark size =2, mark=triangle,mark options={solid}, line width=0.5] table[x=BS ,y=outage ,col sep=semicolon] {./Prob_Outage_RealTime_forBS_6510.txt};

\addplot[dotted,color=black, mark size =2, mark=triangle,mark options={solid}, line width=0.5] table[x=BS ,y=outage_50 ,col sep=semicolon] {./Prob_Outage_OG_forBS_65.txt};

\addplot[solid,color=blue, mark size =2, mark=o,mark options={solid}, line width=0.5] table[x=BS ,y=outage ,col sep=semicolon] {./Prob_Outage_Proposed_forBS_6010.txt};

\addplot[solid,color=black, mark size =2, mark=o,mark options={solid}, line width=0.5] table[x=BS ,y=outage ,col sep=semicolon] {./Prob_Outage_Without_forBS_6010.txt};

\addplot[dashed,color=black, mark size =2, mark=o,mark options={solid}, line width=0.5] table[x=BS ,y=outage ,col sep=semicolon] {./Prob_Outage_RealTime_forBS_6010.txt};

\addplot[dotted,color=black, mark size =2, mark=o,mark options={solid}, line width=0.5] table[x=BS ,y=outage_50 ,col sep=semicolon] {./Prob_Outage_OG_forBS_60.txt};

\draw[red] (axis cs: 90, 0.4) ellipse (5pt and 15pt);
\draw[red] (axis cs: 85, 0.5) ellipse (5pt and 14pt);

    \end{axis}
\end{tikzpicture}
} \caption{Impact of the scBS density on the power outage}
\label{fig:outage_Prob_BS}
\end{figure}
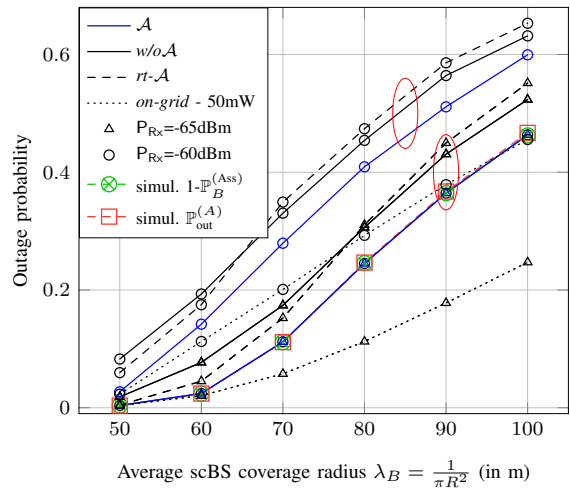

\subsubsection{Impact of the BS density}

Next, we analyze the impact of the scBS density on the probability of power outage, as depicted in Figure \ref{fig:outage_Prob_BS}. Comparing the proposed cell association with \textit{w/o-}$\mathcal{A}$, we observe that the power availability checking provides a relatively constant gain, which increases when the received power constraint $\mathsf{P_{Rx}}$ decreases. Indeed,  with smaller $\mathsf{P_{Rx}}$, user requires less transmit power and can generally associate with more base stations, even if located farther. Hence, a smoother distribution of the available energy can be reached among the scBSs in the network, by non-direct energy sharing. By comparison with the \textit{on-grid} strategy, the outage loss due to energy harvesting is deeper for large cells. Indeed, the considered association strategies with energy-harvesting base stations are limited in their total power consumption by the battery and cannot serve far users due to power shortage, while we assume for the \textit{on-grid} strategy a transmit power constraint at each user only, and not on the overall consumption.

\subsubsection{Received power constraint and harvesting rate}

The probability of power outage is then plotted as a function of the received power constraint $\mathsf{P_{Rx}}$ in Figure \ref{fig:outage_PRx}. We observe that the proposed energy-aware outperforms both \textit{w/o-}$\mathcal{A}$ and \textit{rt-}$\mathcal{A}$ and approaches the performance reached by the \textit{on-grid} policy for $\mathsf{P_{max}^{(OG)}}=$50mW.
Furthermore, the obtained gain increases when the harvesting rate $\lambda_e$ diminishes. This suggests that the power availability checking presented in Definition \ref{def:available} brings robustness to impaired harvesting conditions, i.e. cloudy days for solar photovoltaic panels.
Note that such simulation results have been presented for $N_e=1$, which lower-bounds the gain of the proposed strategy. 

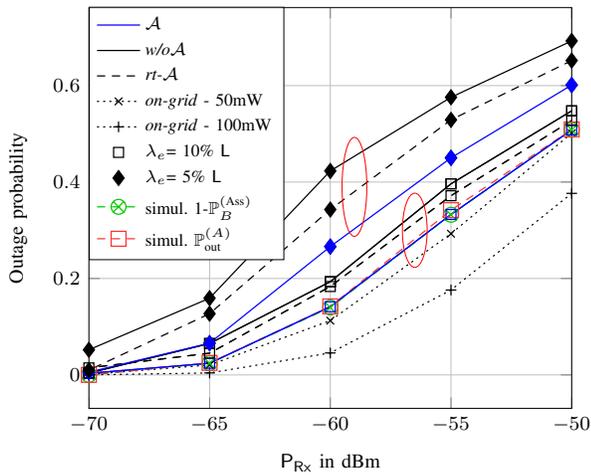
\begin{figure}
\centering
\centering \resizebox{0.9\columnwidth}{!}{%
\begin{tikzpicture}[scale=4,cap=round,>=latex]

  \begin{axis}[%
  xmin=-70, xmax = -50,
	xlabel= $\mathsf{P_{Rx}}$ in dBm,%
    ylabel= Outage probability,%
    y label style={at={(axis description cs:0.07,0.5)},rotate=0,anchor=south},
	every axis/.append style={font=\footnotesize},  
    grid=major,%
    legend style={nodes=right,font=\scriptsize, at={(axis cs:-70,0.238)}, anchor=south west},%
    mark size=2pt]

\addplot[solid,color=blue, line width=0.5] table[x=PRx ,y=outage ,col sep=semicolon] {./Prob_Outage_Proposed_forPrx_6010.txt};
\addlegendentry{$\mathcal{A}$};

\addplot[solid,color=black, line width=0.5] table[x=PRx ,y=outage ,col sep=semicolon] {./Prob_Outage_Without_forPrx_6010.txt};
\addlegendentry{\textit{w/o}$\mathcal{A}$};

\addplot[dashed,color=black, line width=0.5] table[x=PRx ,y=outage ,col sep=semicolon] {./Prob_Outage_RealTime_forPrx_6010.txt};
\addlegendentry{\textit{rt-}$\mathcal{A}$};

\addplot[dotted,color=black, mark size =2, mark=x,mark options={solid}, line width=0.5] table[x=PRx ,y=outage_50 ,col sep=semicolon] {./Prob_Outage_OG_forPrx_60.txt};
\addlegendentry{\textit{on-grid} - 50mW};

\addplot[dotted,color=black, mark size =2, mark=+,mark options={solid}, line width=0.5] table[x=PRx ,y=outage_100 ,col sep=semicolon] {./Prob_Outage_OG_forPrx_60.txt};
\addlegendentry{\textit{on-grid} - 100mW};

\addplot[only marks,color=black, line width=0.5, mark=square,mark options={solid}, line width=0.5] table[x=PRx ,y=outage ,col sep=semicolon] {./Prob_Outage_Without_forPrx_6010.txt};
\addlegendentry{$\lambda_e$= 10\% $\mathsf{L}$};

\addplot[only marks,color=black, mark size =3, line width=0.5, mark=diamond*,mark options={solid}, line width=0.5] table[x=PRx ,y=outage ,col sep=semicolon] {./Prob_Outage_Proposed_forPrx_605.txt};
\addlegendentry{$\lambda_e$= 5\% $\mathsf{L}$};

\addplot[dashed,color=green!80!black, mark size =3, mark=otimes,mark options={solid}, line width=0.5] table[x=PRx ,y=outageTrueAvailable  ,col sep=semicolon] {./Prob_Outage_Proposed_forPrx_6010.txt};
\addlegendentry{simul. 1-$\mathbb{P}_B^{(\text{Ass})}	$};

\addplot[dashed,color=red!80, mark size =3, mark=square,mark options={solid}, line width=0.5] table[x=PRx ,y=outageTrue ,col sep=semicolon] {./Prob_Outage_Proposed_forPrx_6010.txt};
\addlegendentry{simul. $\mathbb{P}_{\text{out}}^{(A)}$};

\addplot[solid,color=blue, mark size =2, mark=square,mark options={solid}, line width=0.5] table[x=PRx ,y=outage ,col sep=semicolon] {./Prob_Outage_Proposed_forPrx_6010.txt};

\addplot[dashed,color=black, mark size =2, mark=square,mark options={solid}, line width=0.5] table[x=PRx ,y=outage ,col sep=semicolon] {./Prob_Outage_RealTime_forPrx_6010.txt};

\addplot[solid,color=black, mark size =2, mark=square,mark options={solid}, line width=0.5] table[x=PRx ,y=outage ,col sep=semicolon] {./Prob_Outage_Without_forPrx_6010.txt};

\addplot[solid,color=blue, mark size =3, mark=diamond*,mark options={solid}, line width=0.5] table[x=PRx ,y=outage ,col sep=semicolon] {./Prob_Outage_Proposed_forPrx_605.txt};

\addplot[dashed,color=black, mark size =3, mark=diamond*,mark options={solid}, line width=0.5] table[x=PRx ,y=outage ,col sep=semicolon] {./Prob_Outage_RealTime_forPrx_605.txt};

\addplot[solid,color=black, mark size =3, mark=diamond*,mark options={solid}, line width=0.5] table[x=PRx ,y=outage ,col sep=semicolon] {./Prob_Outage_Without_forPrx_605.txt};

\draw[red] (axis cs: -59, 0.39) ellipse (5pt and 20pt);
\draw[red] (axis cs: -56.5, 0.3) ellipse (5pt and 15pt);

    \end{axis}
\end{tikzpicture}
}

 \caption{Impact of the required received power constraint $\mathsf{P_{Rx}}$}
\label{fig:outage_PRx}
\end{figure}

\subsubsection{On the frequency of battery broadcast}

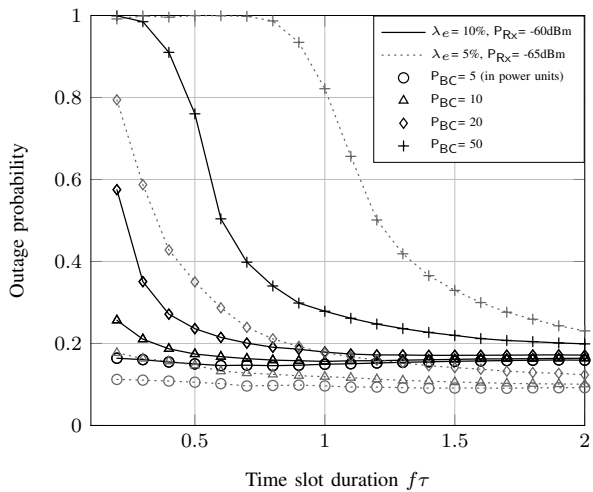
\begin{figure}
	\centering
\centering \resizebox{0.9\columnwidth}{!}{%
\begin{tikzpicture}[scale=4,cap=round,>=latex]

  \begin{axis}[%
  xmin=0.1, xmax = 2,
  ymin=0, ymax = 1,
	xlabel= Time slot duration $f \tau$,%
    ylabel= Outage probability,%
    y label style={at={(axis description cs:0.07,0.5)},rotate=0,anchor=south},
	every axis/.append style={font=\footnotesize},  
    grid=major,%
    legend style={nodes=right,font=\tiny, at={(axis cs:2,1)}, anchor=north east},%
    mark size=2pt]

\addplot[solid,color=black, line width=0.5] table[x=TS ,y=p50 ,col sep=semicolon] {./Prob_Outage_Proposed_freqBC_6010.txt};
\addlegendentry{$\lambda_e$= 10\%, $\mathsf{P_{Rx}}$= -60dBm};

\addplot[dotted,color=black!60, line width=0.5] table[x=TS ,y=p50 ,col sep=semicolon] {./Prob_Outage_Proposed_freqBC_655.txt};
\addlegendentry{$\lambda_e$= 5\%, $\mathsf{P_{Rx}}$= -65dBm};

\addplot[only marks,color=black, mark size =2, mark=o,mark options={solid}, line width=0.5] table[x=TS ,y=p5 ,col sep=semicolon] {./Prob_Outage_Proposed_freqBC_6010.txt};
\addlegendentry{$\mathsf{P_{BC}}$= 5 (in power units)};

\addplot[only marks,color=black, mark size =2, mark=triangle,mark options={solid}, line width=0.5] table[x=TS ,y=p10 ,col sep=semicolon] {./Prob_Outage_Proposed_freqBC_6010.txt};
\addlegendentry{$\mathsf{P_{BC}}$= 10};

\addplot[only marks,color=black, mark size =2, mark=diamond,mark options={solid}, line width=0.5] table[x=TS ,y=p20 ,col sep=semicolon] {./Prob_Outage_Proposed_freqBC_6010.txt};
\addlegendentry{$\mathsf{P_{BC}}$= 20};

\addplot[only marks,color=black, mark size =2, mark=+,mark options={solid}, line width=0.5] table[x=TS ,y=p50 ,col sep=semicolon] {./Prob_Outage_Proposed_freqBC_6010.txt};
\addlegendentry{$\mathsf{P_{BC}}$= 50};

\addplot[solid,color=black, mark size =2, mark=o,mark options={solid}, line width=0.5] table[x=TS ,y=p5 ,col sep=semicolon] {./Prob_Outage_Proposed_freqBC_6010.txt};

\addplot[solid,color=black, mark size =2, mark=triangle,mark options={solid}, line width=0.5] table[x=TS ,y=p10 ,col sep=semicolon] {./Prob_Outage_Proposed_freqBC_6010.txt};

\addplot[solid,color=black, mark size =2, mark=diamond,mark options={solid}, line width=0.5] table[x=TS ,y=p20 ,col sep=semicolon] {./Prob_Outage_Proposed_freqBC_6010.txt};

\addplot[dotted,color=black!60, mark size =2, mark=o,mark options={solid}, line width=0.5] table[x=TS ,y=p5 ,col sep=semicolon] {./Prob_Outage_Proposed_freqBC_655.txt};

\addplot[dotted,color=black!60, mark size =2, mark=triangle,mark options={solid}, line width=0.5] table[x=TS ,y=p10 ,col sep=semicolon] {./Prob_Outage_Proposed_freqBC_655.txt};

\addplot[dotted,color=black!60, mark size =2, mark=diamond,mark options={solid}, line width=0.5] table[x=TS ,y=p20 ,col sep=semicolon] {./Prob_Outage_Proposed_freqBC_655.txt};

\addplot[only marks,color=black!60, mark size =2, mark=+,mark options={solid}, line width=0.5] table[x=TS ,y=p50 ,col sep=semicolon] {./Prob_Outage_Proposed_freqBC_655.txt};

    \end{axis}
\end{tikzpicture}
}

 	\caption{BS battery broadcast: dissipated power and frequency of broadcast}
	\label{fig:freq_BC}
\end{figure}

So far, we have omitted the power $\mathsf{P_{BC}}$ that is consumed by each scBS to periodically broadcast its battery level. To analyze its impact on the power outage probability, we now assume that the time slot duration $\tau$ is scaled by a factor $f \in [0.1,2]$. We adapt the algorithm proposed in Table \ref{algo:steady_state} to this new assumption by limiting the battery capacity to $\mathsf{P}_{\max} - \mathsf{P_{BC}} $ and setting the density of users (resp. of power arrivals) to $\lambda_{MT} f$ (resp. $\lambda_e f$), following the properties of Poisson processes.

As observed in Figure \ref{fig:freq_BC}, the power $\mathsf{P_{BC}}$ dissipated at each time slot significantly impairs the outage probability when the time slot duration is short ($f \leq 0.75$). Indeed, the harvested power units are mostly consumed for battery broadcasting, letting little power available for data transmission.
However, and quite counter-intuitively, the outage probability tends to a fixed value when $f$ increases. The impact of $\mathsf{P_{BC}}$ is then negligible and the performance is solely affected by the ratio between $\lambda_e f $ and $\lambda_{MT} f$, which is constant in our case.
This implies that (i) the battery broadcasts can be spaced out as long as the Poisson assumption for power unit arrivals and number of MTs is valid over the time slot and (ii) that the time slot duration should be set as a function of the traffic variations and harvesting conditions. Stable traffic demand and slowly varying weather conditions (in case of solar panels) allows significant reduction in the required frequency of BS battery broadcasting.

\subsubsection{On the coverage probability}

\begin{figure}
	\centering
\centering \resizebox{0.9\columnwidth}{!}{%
\begin{tikzpicture}[scale=4,cap=round,>=latex]

  \begin{axis}[%
  xmin=-5, xmax = 15,
  ymin=0.9, ymax = 1.0001,
	xlabel= SINR threshold (in dBm),%
    ylabel= Coverage probability,%
    y label style={at={(axis description cs:0.07,0.5)},rotate=0,anchor=south},
	every axis/.append style={font=\footnotesize},  
    grid=major,%
    legend style={nodes=right,font=\scriptsize},%
    legend pos={south west},%
    mark size=2pt]

\addplot[solid,color=blue, ,line width=0.5] table[x=Threshold ,y=NEH_80 ,col sep=semicolon] {./Prob_Coverage_Prop_60m_65dB_1080.txt};
\addlegendentry{$\mathcal{A}$};

\addplot[dashed,color=red!80, mark size =2.5, mark=square,mark options={solid}, mark repeat=2,mark phase=1, line width=0.5] table[x=Threshold ,y=NEH_80 ,col sep=semicolon] {./Prob_Coverage_Prop_60m_65dB_1080_compute.txt};
\addlegendentry{simul. $\mathbb{P}_\gamma^{(A)}$};

\addplot[solid,color=black, line width=0.5] table[x=Threshold ,y=NEH_80 ,col sep=semicolon] {./Prob_Coverage_Without_60m_65dB_1080.txt};
\addlegendentry{\textit{w/o-}$\mathcal{A} \simeq \mathcal{A}$ };

\addplot[dashed,color=black, line width=0.5] table[x=Threshold ,y=NEH_80 ,col sep=semicolon] {./Prob_Coverage_RealTime_60m_65dB_1080.txt};
\addlegendentry{\textit{rt-}$\mathcal{A}$};

\addplot[dotted,color=black, line width=0.5] table[x=Threshold ,y=Set1 ,col sep=semicolon] {./Prob_Coverage_OG_60_0050.txt};
\addlegendentry{\textit{on-grid} - 50mW};

\addplot[only marks,color=black, mark size =2.5, mark=triangle*,mark options={solid}, mark repeat=3,mark phase=1,line width=0.5] table[x=Threshold ,y=NEH_80 ,col sep=semicolon] {./Prob_Coverage_RealTime_60m_65dB_1080.txt};
\addlegendentry{-65dBm / 80 / 60m};

\addplot[only marks,color=black, mark size =2, mark=triangle,mark options={solid}, mark repeat=3,mark phase=2,line width=0.5] table[x=Threshold ,y=NEH_1 ,col sep=semicolon] {./Prob_Coverage_RealTime_60m_65dB_1001.txt};
\addlegendentry{-65dBm / 1 / 60m};

\addplot[only marks,color=black, mark size =2, mark=o,mark options={solid}, mark repeat=3,mark phase=1,line width=0.5] table[x=Threshold ,y=NEH_1 ,col sep=semicolon] {./Prob_Coverage_RealTime_60m_60dB_1001.txt};
\addlegendentry{-60dBm / 1 / 60m};

\addplot[dashed,color=black, mark size =2.5, mark=triangle*,mark options={solid}, mark repeat=3,mark phase=1,line width=0.5] table[x=Threshold ,y=NEH_80 ,col sep=semicolon] {./Prob_Coverage_RealTime_60m_65dB_1080.txt};

\addplot[solid,color=black, mark size =2.5, mark=triangle*,mark options={solid}, mark repeat=3,mark phase=1,line width=0.5] table[x=Threshold ,y=NEH_80 ,col sep=semicolon] {./Prob_Coverage_Without_60m_65dB_1080.txt};

\addplot[solid,color=blue, mark size =2.5, mark=triangle*,mark options={solid}, mark repeat=3,mark phase=1,line width=0.5] table[x=Threshold ,y=NEH_80 ,col sep=semicolon] {./Prob_Coverage_Prop_60m_65dB_1080.txt};

\addplot[dashed,color=black, mark size =2, mark=triangle,mark options={solid}, mark repeat=3,mark phase=2,line width=0.5] table[x=Threshold ,y=NEH_1 ,col sep=semicolon] {./Prob_Coverage_RealTime_60m_65dB_1001.txt};

\addplot[dashed,color=black, mark size =2, mark=o,mark options={solid}, mark repeat=3,mark phase=1,line width=0.5] table[x=Threshold ,y=NEH_1 ,col sep=semicolon] {./Prob_Coverage_RealTime_60m_60dB_1001.txt};

\addplot[solid,color=black, mark size =2, mark=o,mark options={solid}, mark repeat=3,mark phase=1,line width=0.5] table[x=Threshold ,y=NEH_1 ,col sep=semicolon] {./Prob_Coverage_Without_60m_60dB_1001.txt};

\addplot[dotted,color=black, mark size =2, mark=o,mark options={solid}, mark repeat=3,mark phase=1,line width=0.5] table[x=Threshold ,y=Set1 ,col sep=semicolon] {./Prob_Coverage_OG_60_0050.txt};

\draw[red] (axis cs: 14, 0.968) ellipse (5pt and 30pt);

    \end{axis}
\end{tikzpicture}
}

 	\caption{Coverage probability (legend: $\mathsf{P_{Rx}}$ / $N_e$ / R)}
	\label{fig:coverage_Prob}
\end{figure}
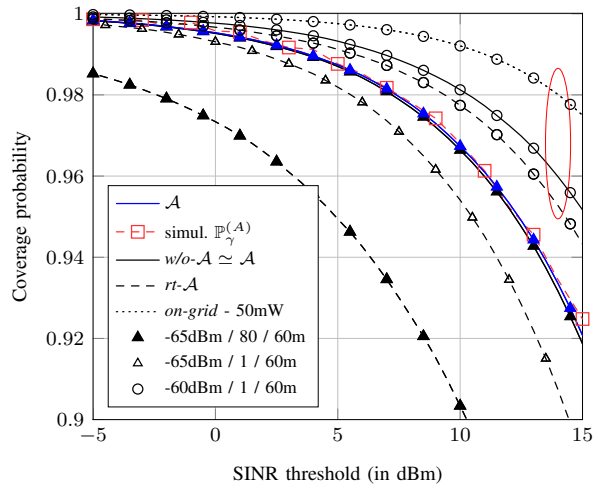 

Finally, we observed from Figure \ref{fig:coverage_Prob} that the proposed association strategy does not lead to noticeable reduction in the coverage probability compared to \textit{w/o-}$\mathcal{A}$, although it allows to serve more users (This is plotted for the case $\mathsf{P_{Rx}}=$-65dBm / $N_e =80$ / R=60m only, for more readability of the Figure). Even though a MT may associate to a farther scBS by applying the power availability checking, the resulting additional transmit power at the serving BS is not high enough to affect the interference received at users of other cells.
Such remark is also valid for the \textit{on-grid} scheme, which mostly improves the outage probability at cell-edge.
Another important outcome that can be derived from this figure regards the user selection strategy. While for both \textit{w/o-}$\mathcal{A}$ and $\mathcal{A}$, users are selecting in ascending order of power requirement, the \textit{rt-}$\mathcal{A}$ is based on a first-come first served approach. It follows that a user with very high power requirement, for example due to weak channel state, will be served as soon as sufficient energy is available, hence generating high interference for other users. This is particularly detrimental for bursty energy arrivals ($N_e = 80$).

\section{Conclusion} 
\label{sec:conclusion}

We have proposed in this paper a novel user association strategy for small-cell networks constrained by energy harvesting. It takes advantages from (i) a BS power availability checking prior to effective association, (ii) transmit power minimization, to satisfy a received power constraint and (iii) user selection at BSs  in ascending order of the  power requirement. The proposed strategy only necessitates periodical broadcast of the BS battery level and accommodates the power consumed to transmit data to all users potentially associated to the same BS.
Then, we have proposed for it a comprehensive analysis which jointly captures the effect of the battery fluctuations and the users power requirement. Simulations show that the proposed scheme significantly outperforms conventional schemes. By efficiently offloading traffic at BSs with low battery, the proposed cell association allows a more even distribution of the available energy in the network and acts as a non-direct energy-sharing mechanism. In addition, it is more robust to energy harvesting impairment and BS density reduction.

\appendices

\section{Proof of Eq. \ref{eq:estimate_Psetminus0}}
\label{appendix:estimate_Psetminus0}

From the displacement theorem as used in \cite[Lemma 1]{Blaszczyszyn2013_Infocom},
\begin{align}
& \Lambda_{MT}(p) = \lambda_{MT} \int_{\mathbb{R}^+} \mathbb{P} \left( \mathsf{P_{Rx}} \frac{l(r)}{ \chi_{0j}} \in [0,p)\right) dr 
\nonumber \\
& = \lambda_{MT} \mathbb{E}_{\chi} \left[  \pi \left(\frac{p \chi_{0j}}{\kappa \mathsf{P_{Rx}}}\right)^{\frac{2}{\alpha}} \right]
 = \lambda_{MT} \pi \left(\frac{p}{\kappa \mathsf{P_{Rx}}}\right)^{\frac{2}{\alpha}} \mathbb{E} \left[\chi_{0j}^{\frac{2}{\alpha}}\right]
\label{eq:Lambda_MT}
\end{align}
The last line comes from the independence of the shadowing attenuation on each scBS-MT link and
$\mathbb{E} \left[\chi_{0j} ^{2/\alpha}\right]$ is the fractional moment of a log-normal RV. With $\zeta = 10/ \ln(10)$,
\begin{align}
\mathbb{E} \left[\chi_{0j}^{2/\alpha}\right] = \exp \left(  \frac{2/\alpha}{\zeta} \mu + \frac{1}{2} \left(\frac{2/\alpha}{\zeta}\right)^2 \sigma^2 \right) = \Upsilon_{\chi}
\end{align}

Then, the estimate total power consumption $\widetilde{\mathsf{P}}_{k\setminus 0}^{(T)}$ is computed using the Campbell's theorem as follows:
\begin{align*}
\widetilde{\mathsf{P}}_{k\setminus 0}^{(T)}  &
= \underset{0}{\overset{p_{k0}}{\int}} p d\Lambda_{MT}(p)dp
 = \lambda_{MT} \Upsilon \frac{2 / \alpha} {2 / \alpha +1}  \left( p_{k0} \right)^{\frac{2}{\alpha}+1}.
\end{align*}

\section{Proof of Theorem \ref{prop:useful_lemma}}
\label{appendix:useful_lemma}

Let's consider the probability generating function of total power consumption $\Sigma = \underset{j \in \Phi}{\sum} \lceil p_{kj} \rceil$:

\begin{align}
& G_{\Sigma} \left(t\right) = \mathbb{E} \left( t^{\Sigma} \right)
= \underset{m=0}{\overset{\infty}{\sum}} t^m \mathbb{P} \left(\Sigma = m\right)
\\
& \text{such that} \quad  
\forall m, \quad \mathbb{P} \left(\Sigma = m\right) = \frac{1}{m!}\frac{d^m G_{\Sigma} }{dt^m} (0)
\end{align}

Since the MTs served by the typical base station scBS$^{(0)}$ form a PPP of density $\Lambda(p)$, we have:
\begin{align*}
G_{\Sigma} (t)& =  \mathbb{E} \left( t^{\left(  \sum_j \lceil p_{0j} \rceil \right)} \right)
=\mathbb{E} \left( \prod_j  t^{\lceil p_{0j} \rceil } \right)
\quad \text{with} \quad j\in \Phi \cap \left[0,P\right[
\\
& = \exp \left(- \int_{0}^{P} (1- t^{\lceil p \rceil}) d\Lambda (p)dp \right) 
\\
& = \exp \left(- \sum_{q=1}^{P} (1- t^q) \int_{q-1}^{q} d\Lambda (p)dp \right) 
\end{align*}
Let's denote $C_q = \int_{q-1}^{q} d\Lambda (p)dp$. 
Then, proceeding by induction, we get:
\begin{align*}
(0)&\left \lbrace \begin{array}{ll}
G_{\Sigma} \left(t\right)
& = \exp \left(\sum_{q=1}^{P} C_q (t^q -1) \right) 
\\
G_{\Sigma} \left(0\right)
& = \exp \left(-\sum_{q=1}^{P} C_q \right) 
\end{array}
\right.
\\
(1)&\left \lbrace \begin{array}{ll}
\frac{d G_{\Sigma}}{ dt } (t) 
&= \left(\sum_{q=1}^{P} C_q q t^{q -1}\right) G_{\Sigma} (t)
\\
\frac{d G_{\Sigma}}{ dt } (0) 
&= C_1 G_{\Sigma} (0)
\end{array}
\right.
\\
(2)&\left \lbrace \begin{array}{ll}
\frac{d^2 G_{\Sigma}}{ dt^2 } (t) 
&= \left(\sum_{q=1}^{P} C_q q t^{q -1}\right) \frac{d G_{\Sigma}}{ dt } (t) 
\\
& \quad +
\left(\sum_{q=2}^{P} C_q q (q-1) t^{q -2}\right) G_{\Sigma} (t)
\\
\frac{d^2 G_{\Sigma}}{ dt^2 } (0) 
&= C_2 G_{\Sigma} (0) + C_1 \frac{d G_{\Sigma}}{ dt } (0)
\end{array}
\right.
\end{align*}
which leads to 
\begin{align*}
\frac{d^m G_{\Sigma}}{ dt^m } (0) = \underset{q=1}{\overset{m^{\star}}{\sum}} \binom{m-1}{q-1} C_q \; q! \; \frac{d^{m-q} G_{\Sigma}}{ dt^{m-q} } (0) 
\end{align*}
and completes the proof.

\section{Proof of Proposition \ref{prop:coverage_probability}}
\label{appendix:coverage_prob}

We first focus on the coverage probability. We have:
\begin{align}
\mathbb{P}_\gamma & = \mathbb{P} \left(  \frac{\mathsf{P_{Rx}} \vert h_{1,0} \vert^2}{I} \geq T \right) \nonumber \\
&=
\mathbb{E} \left[ \mathbb{P}_h \left( \vert h_{1,0} \vert^2 \geq \frac{T}{\mathsf{P_{Rx}} } \sum_{m=1}^{\mathsf{L}} \sum_{l=m}^{\mathsf{L}} {\sum}_{\Phi_{m,l}^{(I)}}
 \frac{m}{ N_{RB}} \frac{\chi_{k,0}}{l(r_{k,0})}\vert h_{k,0} \vert^2 \right) \right]
\nonumber \\ 
&= \mathbb{E} \left[ \exp \left( - \sum_{m=1}^{\mathsf{L}} \sum_{l=m}^{\mathsf{L}} {\sum}_{\Phi_{m,l}^{(I)}} \nu \frac{T}{\mathsf{P_{Rx}} } \frac{m}{ N_{RB}} \frac{\chi_{k,0}}{l(r_{k,0})}\vert h_{k,0} \vert^2 \right) \right]
\nonumber \\ 
&= \prod_{m=1}^{\mathsf{L}} \prod_{l=m}^{\mathsf{L}} \mathbb{E} \left[ \exp \left( -  {\sum}_{\Phi_{m,l}^{(I)}} \nu \frac{T}{\mathsf{P_{Rx}} } \frac{m}{ N_{RB}} \frac{\chi_{k,0}}{l(r_{k,0})}\vert h_{k,0} \vert^2 \right) \right]
\nonumber \\
&= \prod_{m=1}^{\mathsf{L}} \prod_{l=m}^{\mathsf{L}}  \mathcal{L}_{m,l} \left(\nu \frac{T}{\mathsf{P_{Rx}} }\right)
\end{align}
The Laplace transform of the total interference received from scBSs in $\Phi_{m,l}^{(I)}$ is given by
\begin{align}
\hspace*{-5pt}\mathcal{L}_{m,l} \left(s\right) 
&= \mathbb{E} \left[ {\prod}_{\Phi_{m,l}^{(I)}}  \exp \left( - s \frac{m}{ N_{RB}} \frac{\chi_{k,0}}{l(r_{k,0})}\vert h_{k,0} \vert^2 \right) \right]
\nonumber \\
&= \mathbb{E}_{\Phi_{m,l}^{(I)}} \left[ {\prod}_{\Phi_{m,l}^{(I)}}  \mathbb{E}_h \left[\exp \left( - s \frac{m}{ N_{RB}} \frac{\chi_{k,0}}{l(r_{k,0})}\vert h_{k,0} \vert^2 \right) \right]\right]
\nonumber \\
&= \mathbb{E}_{\Phi_{m,l}^{(I)}} \left[ {\prod}_{\Phi_{m,l}^{(I)}} \frac{\nu}{\nu + s \frac{m}{ N_{RB}} \frac{\chi_{k,0}}{l(r_{k,0})}} \right]
\end{align}
Let's denote $z_k =\frac{ N_{RB}}{m}  \frac{l(r_{k,0})}{\chi_{k,0}} = \frac{ N_{RB}}{m\mathsf{P_{Rx}} } p_{k0}, \forall k$, which refers to the inverse of the power received from the interfering scBS$_k$. The set of $z_k$ thus forms a displaced PPP of $\Phi_{m,l}^{(I)}$. Its intensity is computed using a proof similar to Appendix \ref{appendix:estimate_Psetminus0} as follows:
\begin{align}
& \Lambda_{m,l}^{(I)}(z) = \rho_{m,l} \lambda_{B} \Upsilon_m z^{\frac{2}{\alpha}}
\end{align}
where $\rho_{m,l}$ and $\Upsilon_m$ are as in Proposition \ref{prop:coverage_probability}. In addition, let's consider scBS$_j$ the closest interfering base station of $\Phi_{m,l}^{(I)}$ (closest in terms of both path-loss and shadowing).
If scBS$_j$ is available for the typical user MT$_0$, we know for sure that $z_{j} > \frac{1}{\mathsf{P_{Rx}}}$. The availability criteria also implies that $z_{j} < \frac{ N_{RB}}{m\mathsf{P_{Rx}} } p_l^{(\text{cov})}$. 
If scBS$_j$ is not available for MT$_0$, then $z_{j} > \frac{ N_{RB}}{m\mathsf{P_{Rx}} } p_l^{(\text{cov})}$. 
Defining $z_{m,l} = \min \left(\frac{1}{\mathsf{P_{Rx}}}, \frac{ N_{RB}}{m\mathsf{P_{Rx}} } p_l^{(\text{cov})}\right)$ and $ u = \frac{\nu}{s} z$, we get:
\begin{align}
\mathcal{L}_{m,l} \left(s\right) 
&= \mathbb{E}_{\Phi_{m,l}^{(I)}} \left[ {\prod}_{\Phi_{m,l}^{(I)}} \frac{\nu}{\nu + s / z_k} \right]
\nonumber \\
&= \exp \left( - \int_{z_{m,l}}^{\infty} \left(1- \frac{\nu}{\nu + s / z}\right) d \Lambda_{m,l}^{(I)}(z) dz
\right)
\nonumber \\
&= \exp \left( -  \rho_{m,l} \lambda_{B} \Upsilon_m \frac{2}{\alpha} \left(\frac{s}{\nu}\right)^{\frac{2}{\alpha}} \int_{u_{m,l}}^{\infty} \frac{u^{\frac{2}{\alpha}-1}}{1+u}  du
\right)
\end{align}
which concludes the proof.

\fontsize{10}{10}
\selectfont

\bibliographystyle{IEEEtranN}
{\footnotesize 
\bibliography{biblio_EHBS_journal1}

\begin{thebibliography}{34}
\providecommand{\natexlab}[1]{#1}
\providecommand{\url}[1]{#1}
\csname url@samestyle\endcsname
\providecommand{\newblock}{\relax}
\providecommand{\bibinfo}[2]{#2}
\providecommand{\BIBentrySTDinterwordspacing}{\spaceskip=0pt\relax}
\providecommand{\BIBentryALTinterwordstretchfactor}{4}
\providecommand{\BIBentryALTinterwordspacing}{\spaceskip=\fontdimen2\font plus
\BIBentryALTinterwordstretchfactor\fontdimen3\font minus
  \fontdimen4\font\relax}
\providecommand{\BIBforeignlanguage}[2]{{%
\expandafter\ifx\csname l@#1\endcsname\relax
\typeout{** WARNING: IEEEtranN.bst: No hyphenation pattern has been}%
\typeout{** loaded for the language `#1'. Using the pattern for}%
\typeout{** the default language instead.}%
\else
\language=\csname l@#1\endcsname
\fi
#2}}
\providecommand{\BIBdecl}{\relax}
\BIBdecl

\bibitem[Liu et~al.(2015)Liu, Wang, Chen, Elkashlan, Wong, Schober, and
  Hanzo]{Liu2015}
D.~Liu, L.~Wang, Y.~Chen, M.~Elkashlan, K.-K. Wong, R.~Schober, and L.~Hanzo,
  ``{User Association in 5G Networks: A Survey and an Outlook},'' \emph{arXiv
  preprint arXiv:1509.00338}, 2015.

\bibitem[Buzzi et~al.(2016)Buzzi, I, Klein, Poor, Yang, and Zappone]{Buzzi2016}
S.~Buzzi, C.~L. I, T.~E. Klein, H.~V. Poor, C.~Yang, and A.~Zappone, ``{A
  Survey of Energy-Efficient Techniques for 5G Networks and Challenges
  Ahead},'' \emph{IEEE Journal on Selected Areas in Comm.}, vol.~34, no.~4, pp.
  697--709, April 2016.

\bibitem[Liu and Natarajan(2015)]{Liu2015Elsevier}
C.~Liu and B.~Natarajan, ``{Power management in heterogeneous networks with
  energy harvesting base stations},'' \emph{Physical Communication}, vol.~16,
  pp. 14 -- 24, 2015.

\bibitem[Mao et~al.(2015)Mao, Luo, Zhang, and Letaief]{Mao2015}
Y.~Mao, Y.~Luo, J.~Zhang, and K.~B. Letaief, ``{Energy Harvesting Small Cell
  Networks: Feasibility, Deployment and Operation},'' \emph{arXiv preprint
  arXiv:1501.02620}, 2015.

\bibitem[Rubio et~al.(2014)Rubio, Pascual-Iserte, del Olmo, and
  Vidal]{Rubio2014}
J.~Rubio, A.~Pascual-Iserte, J.~del Olmo, and J.~Vidal, ``{User association for
  load balancing in heterogeneous networks powered with energy harvesting
  sources},'' in \emph{Globecom Workshops (GC Wkshps)}, Dec. 2014, pp.
  1248--1253.

\bibitem[Liu et~al.(2014)Liu, Chen, Chai, Zhang, and Pan]{Liu2014optim}
D.~Liu, Y.~Chen, K.~K. Chai, T.~Zhang, and C.~Pan, ``{Adaptive user association
  in HetNets with renewable energy powered base stations},'' in \emph{21st Int.
  Conf. on Telecommunications (ICT)}, May 2014, pp. 93--97.

\bibitem[Han and Ansari(2016)]{Han2016}
T.~Han and N.~Ansari, ``{A Traffic Load Balancing Framework for
  Software-Defined Radio Access Networks Powered by Hybrid Energy Sources},''
  \emph{IEEE/ACM Transactions on Networking}, vol.~24, no.~2, pp. 1038--1051,
  Apr. 2016.

\bibitem[Maghsudi and Hossain(2015)]{Maghsudi2015}
S.~Maghsudi and E.~Hossain, ``{Distributed User Association in Energy
  Harvesting Small Cell Networks: A Competitive Market Model with
  Uncertainty},'' \emph{arXiv preprint arXiv:1511.01976}, 2015.

\bibitem[Ku et~al.(2015)Ku, Li, Chen, and Liu]{Ku2015}
M.~L. Ku, W.~Li, Y.~Chen, and K.~J.~R. Liu, ``{Advances in Energy Harvesting
  Communications: Past, Present, and Future Challenges},'' \emph{IEEE Comm.
  Surveys Tutorials}, vol.~PP, no.~99, pp. 1--1, 2015.

\bibitem[Lu et~al.(2015)Lu, Wang, Niyato, Kim, and Han]{Lu2015}
X.~Lu, P.~Wang, D.~Niyato, D.~I. Kim, and Z.~Han, ``{Wireless Networks With RF
  Energy Harvesting: A Contemporary Survey},'' \emph{IEEE Comm. Surveys
  Tutorials}, vol.~17, no.~2, pp. 757--789, Second quarter 2015.

\bibitem[Wang et~al.(2015)Wang, Elkashlan, Heath~Jr, Di~Renzo, and
  Wong]{Wang2015}
L.~Wang, M.~Elkashlan, R.~W. Heath~Jr, M.~Di~Renzo, and K.-K. Wong,
  ``{Millimeter wave power transfer and information transmission},''
  \emph{arXiv preprint arXiv:1512.05915}, 2015.

\bibitem[Ghazanfari et~al.(2015)Ghazanfari, Tabassum, and
  Hossain]{Ghazanfari2015}
\BIBentryALTinterwordspacing
A.~Ghazanfari, H.~Tabassum, and E.~Hossain, ``{Ambient RF Energy Harvesting in
  Ultra-Dense Small Cell Networks: Performance and Trade-offs},'' Dec. 2015.
  [Online]. Available: \url{http://arxiv.org/abs/1512.03122}
\BIBentrySTDinterwordspacing

\bibitem[Maghsudi and Hossain(2016)]{Maghsudi2016}
S.~Maghsudi and E.~Hossain, ``{Distributed User Association in Energy
  Harvesting Small Cell Networks: A Probabilistic Model},'' \emph{arXiv
  preprint arXiv:1601.07795}, 2016.

\bibitem[Krikidis(2015)]{Krikidis2015}
I.~Krikidis, ``{Relay Selection in Wireless Powered Cooperative Networks With
  Energy Storage},'' \emph{IEEE Journal on Selected Areas in Communications},
  vol.~33, no.~12, pp. 2596--2610, Dec. 2015.

\bibitem[Liu(2016)]{Liu2016}
K.~H. Liu, ``{Performance Analysis of Relay Selection for Cooperative Relays
  Based on Wireless Power Transfer With Finite Energy Storage},'' \emph{IEEE
  Transactions on Vehicular Technology}, vol.~65, no.~7, pp. 5110--5121, July
  2016.

\bibitem[Gu et~al.(2016)Gu, Chen, Li, and Vucetic]{Gu2016}
Y.~Gu, H.~Chen, Y.~Li, and B.~Vucetic, ``{Distributed Multi-Relay Selection in
  Accumulate-then-Forward Energy Harvesting Relay Networks},'' \emph{arXiv
  preprint arXiv:1602.00339}, 2016.

\bibitem[Niyato et~al.(2012)Niyato, Lu, and Wang]{Niyato2012}
D.~Niyato, X.~Lu, and P.~Wang, ``{Adaptive power management for wireless base
  stations in a smart grid environment},'' \emph{IEEE Wireless Communications},
  vol.~19, no.~6, pp. 44--51, Dec. 2012.

\bibitem[Dhillon et~al.(2014)Dhillon, Li, Nuggehalli, Pi, and
  Andrews]{Dhillon2014_Fundamentals}
H.~Dhillon, Y.~Li, P.~Nuggehalli, Z.~Pi, and J.~Andrews, ``{Fundamentals of
  Heterogeneous Cellular Networks with Energy Harvesting},'' \emph{IEEE Trans.
  on Wireless Comm.}, vol.~13, no.~5, pp. 2782--2797, May 2014.

\bibitem[Sakr and Hossain(2014)]{Sakr2014}
A.~Sakr and E.~Hossain, ``{Analysis of multi-tier uplink cellular networks with
  energy harvesting and flexible cell association},'' in \emph{IEEE Global
  Comm. Conf. (GLOBECOM)}, Dec. 2014, pp. 4525--4530.

\bibitem[Yu et~al.(2015)Yu, Lee, Quek, and Hong]{Yu2015}
P.-S. Yu, J.~Lee, T.~Quek, and Y.-W. Hong, ``{Energy harvesting personal cells
  - traffic offloading and network throughput},'' in \emph{IEEE Int. Conf. on
  Comm. (ICC)}, June 2015, pp. 2184--2189.

\bibitem[Sakr and Hossain(2015)]{Sakr2015}
A.~Sakr and E.~Hossain, ``{Analysis of K -Tier Uplink Cellular Networks With
  Ambient RF Energy Harvesting},'' \emph{IEEE Journal on Selected Areas in
  Comm.}, vol.~33, no.~10, pp. 2226--2238, Oct. 2015.

\bibitem[Ulukus et~al.(2015)Ulukus, Yener, Erkip, Simeone, Zorzi, Grover, and
  Huang]{Ulukus2015}
S.~Ulukus, A.~Yener, E.~Erkip, O.~Simeone, M.~Zorzi, P.~Grover, and K.~Huang,
  ``{Energy Harvesting Wireless Communications: A Review of Recent Advances},''
  \emph{IEEE Journal on Selected Areas in Comm.}, vol.~33, no.~3, pp. 360--381,
  Mar. 2015.

\bibitem[Ozel et~al.(2011)Ozel, Tutuncuoglu, Yang, Ulukus, and Yener]{Ozel2011}
O.~Ozel, K.~Tutuncuoglu, J.~Yang, S.~Ulukus, and A.~Yener, ``{Transmission with
  energy harvesting nodes in fading wireless channels: Optimal policies},''
  \emph{IEEE Journal on Selected Areas in Comm.}, vol.~29, no.~8, pp.
  1732--1743, 2011.

\bibitem[Ho and Zhang(2012)]{Ho2012}
C.~K. Ho and R.~Zhang, ``{Optimal energy allocation for wireless communications
  with energy harvesting constraints},'' \emph{IEEE Trans. on Signal Proc.},
  vol.~60, no.~9, pp. 4808--4818, 2012.

\bibitem[Michelusi et~al.(2012)Michelusi, Badia, Carli, Stamatiou, and
  Zorzi]{Michelusi2012}
N.~Michelusi, L.~Badia, R.~Carli, K.~Stamatiou, and M.~Zorzi, ``{Correlated
  energy generation and imperfect State-of-Charge knowledge in energy
  harvesting devices},'' in \emph{8th Int. Wireless Commu. and Mobile Computing
  Conf. (IWCMC)}, Aug. 2012, pp. 401--406.

\bibitem[Song et~al.(2014)Song, Zhao, Zhou, and Han]{Song2014}
Y.~Song, M.~Zhao, W.~Zhou, and H.~Han, ``{Throughput-optimal user association
  in energy harvesting relay-assisted cellular networks},'' in \emph{Sixth Int.
  Conf. on Wireless Comm. and Signal Proc. (WCSP)}, Oct 2014, pp. 1--6.

\bibitem[Andrews et~al.(2014)Andrews, Buzzi, Choi, Hanly, Lozano, Soong, and
  Zhang]{Andrews2014}
J.~Andrews, S.~Buzzi, W.~Choi, S.~Hanly, A.~Lozano, A.~Soong, and J.~Zhang,
  ``{What Will 5G Be?}'' \emph{IEEE Journal on Selected Areas in Comm.},
  vol.~32, no.~6, pp. 1065--1082, June 2014.

\bibitem[Bhushan et~al.(2014)Bhushan, Li, Malladi, Gilmore, Brenner,
  Damnjanovic, Sukhavasi, Patel, and Geirhofer]{Bhushan2014}
N.~Bhushan, J.~Li, D.~Malladi, R.~Gilmore, D.~Brenner, A.~Damnjanovic,
  R.~Sukhavasi, C.~Patel, and S.~Geirhofer, ``{Network densification: the
  dominant theme for wireless evolution into 5G},'' \emph{IEEE Comm. Magazine},
  vol.~52, no.~2, pp. 82--89, Feb. 2014.

\bibitem[Lee et~al.(2013)Lee, Zhang, and Huang]{Seunghyun2013}
S.~Lee, R.~Zhang, and K.~Huang, ``{Opportunistic Wireless Energy Harvesting in
  Cognitive Radio Networks},'' \emph{IEEE Trans. on Wireless Comm.}, vol.~12,
  no.~9, pp. 4788--4799, Sept. 2013.

\bibitem[Miozzo et~al.(2014)Miozzo, Zordan, Dini, and Rossi]{Miozzo2013}
M.~Miozzo, D.~Zordan, P.~Dini, and M.~Rossi, ``{SolarStat: Modeling
  photovoltaic sources through stochastic Markov processes},'' in \emph{IEEE
  Energy Conference (ENERGYCON) International}.\hskip 1em plus 0.5em minus
  0.4em\relax IEEE, 2014, pp. 688--695.

\bibitem[Baccelli and B{\l}aszczyszyn(2009)]{book_sto_geo}
F.~Baccelli and B.~B{\l}aszczyszyn, ``{Stochastic Geometry and Wireless
  Networks, Volume I - Theory},'' in \emph{Foundations and Trends in
  Networking. NoW Publishers}, 2009.

\bibitem[Andrews et~al.(2011)Andrews, Baccelli, and
  Ganti]{Andrews2011_tractable}
J.~G. Andrews, F.~Baccelli, and R.~K. Ganti, ``{A Tractable Approach to
  Coverage and Rate in Cellular Networks},'' \emph{IEEE Trans. on Comm.},
  vol.~59, no.~11, pp. 3122--3134, Nov. 2011.

\bibitem[Huang and Ansari(2015)]{Huang2015}
X.~Huang and N.~Ansari, ``{Energy sharing within EH-enabled wireless
  communication networks},'' \emph{IEEE Wireless Communications}, vol.~22,
  no.~3, pp. 144--149, June 2015.

\bibitem[Blaszczyszyn et~al.(2013)Blaszczyszyn, Karray, and
  Keeler]{Blaszczyszyn2013_Infocom}
B.~Blaszczyszyn, M.~Karray, and H.~Keeler, ``{Using Poisson processes to model
  lattice cellular networks},'' in \emph{IEEE Proceedings INFOCOM}, Apr. 2013,
  pp. 773--781.

\end{thebibliography}
}

\end{document}